\documentclass[letterpaper,10pt,conference]{ieeeconf}

\usepackage{cite}
\usepackage{graphicx}
\usepackage{tikz}
\usepackage{subcaption}
\usepackage[cmex10]{amsmath}
\interdisplaylinepenalty=2500
\usepackage{algpseudocode}
\usepackage{algorithmicx} 
\usepackage{array}
\usepackage{url}
\usepackage{mathrsfs}

% use todonotes package

%--- Self-Defined Package -------------
\usepackage{epsfig}
\usepackage{amsfonts,amssymb,multirow,bigstrut,booktabs,ctable,latexsym}

\usepackage[mathscr]{eucal}
\usepackage{verbatim}

\usepackage{amsthm}
\usepackage{algorithm}

\usepackage{stmaryrd}

\hyphenation{op-tical net-works semi-conduc-tor}
\IEEEoverridecommandlockouts  
\pdfminorversion=4

\begin{document}

\newtheorem{definition}{Definition}
\newtheorem{lemma}{Lemma}
\newtheorem{corollary}{Corollary}
\newtheorem{theorem}{Theorem}
\newtheorem{example}{Example}
\newtheorem{proposition}{Proposition}
\newtheorem{remark}{Remark}
\newtheorem{assumption}{Assumption}
\newtheorem{corrolary}{Corrolary}
\newtheorem{property}{Property}
\newtheorem{ex}{EX}
\newtheorem{problem}{Problem}
\newcommand{\argmin}{\arg\!\min}
\newcommand{\argmax}{\arg\!\max}
\newcommand{\st}{\text{s.t.}}
\newcommand \dd[1]  { \,\textrm d{#1}  }

\title{\Large\bf Safety-Critical Control Synthesis for Unknown Sampled-Data Systems via Control Barrier Functions}

\author{Luyao Niu, Hongchao Zhang, and Andrew Clark % 
\thanks{L. Niu, H. Zhang, and A. Clark are with the Department of Electrical and Computer Engineering, Worcester Polytechnic Institute, Worcester, MA 01609 USA.
{\tt\small \{lniu,hzhang9,aclark\}@wpi.edu}}
	\thanks{This work was supported by the National Science Foundation and the Office of Naval Research via grants CNS-1941670 and N00014-17-1-2946.}
}
\thispagestyle{empty}
\pagestyle{empty}
% make the title area

\maketitle

%%%%%%%%%%%%%%%%%%%%%%%%%%%%%%%
\begin{abstract}
%%%%%%%%%%%%%%%%%%%%%%%%%%%%%%%%
As the complexity of control systems increases, safety becomes an increasingly important property since safety violations can damage the plant and put the system operator in danger. When the system dynamics are unknown, safety-critical synthesis becomes more challenging. Additionally, modern systems are controlled digitally and hence behave as sampled-data systems, i.e., the system dynamics evolve continuously while the control input is applied at discrete time steps. In this paper, we study the problem of control synthesis for safety-critical sampled-data systems with unknown dynamics. We overcome the challenges introduced by sampled-data implementation and unknown dynamics by constructing a set of control barrier function (CBF)-based constraints. By satisfying the constructed CBF constraint at each sampling time, we guarantee the unknown sampled-data system is safe for all time. We formulate a non-convex program to solve for the control signal at each sampling time. We decompose the non-convex program into two convex sub-problems. We illustrate the proposed approach using a numerical case study.
\end{abstract}

%%%%%%%%%%%%%%%%%%%%%%%%%%%%%%%%%%%%%%%%%%%%%%%%%%%%%%%%%%
\section{Introduction}
%%%%%%%%%%%%%%%%%%%%%%%%%%%%%%%%%%%%%%%%%%%%%%%%%%%%%%%%%%

Safety-critical cyber-physical systems (CPSs) are found in applications such as autonomous vehicles and advanced manufacturing. The safety property is typically formulated as forward invariance of a given safe set. Safety violations could lead to severe damage to the controlled plant or danger to human operators \cite{knight2002safety}. Control synthesis for safety-critical CPSs has been extensively studied in existing literature when the models of the CPSs are known \cite{ames2016control,wang2017safety,cohen2020approximate}. 

Several factors may cause safety violations of the plant even when the nominal controller is designed to be safe. One challenge is raised by the digital/discrete implementation of a continuous-time system with continuous input. In a practical digital implementation, the system state is only observable at each sampling time, and the control signal is applied in a zero-order hold (ZOH) manner during each sampling period. That is, the system is implemented as a sampled-data system. Another challenge is that the system models used to design controllers are not perfect in practice, and thus there exist uncertainties in the system model. Due to these uncertainties, the designed controller may fail to guarantee safety even if safety is guaranteed for the nominal  plant.

These two challenges have been studied separately. Safety-critical control synthesis has been studied for sampled-data system with known dynamics \cite{singletary2020control,cortez2019control,breeden2021control}. For unmodeled systems with continuous-time or discrete-time dynamics, safety-critical control synthesis has been studied by assuming the existence of a backup controller that ensures safety of the unknown system \cite{mannucci2017safe,folkestad2020data, taylor2020learning}, or a well-calibrated model, e.g., a model learned using a Gaussian process \cite{jagtap2020control,wang2018safe,berkenkamp2017safe}. Although high-probability safety guarantees can be achieved by the methods described in \cite{jagtap2020control,wang2018safe,berkenkamp2017safe}, they suffer from a tradeoff between overly conservative learned model when faced with large uncertainty and the potential safety violation when failing to capture the true dynamics. To the best of our knowledge, jointly addressing these two challenges without prior knowledge of a safe backup controller or a well-calibrated model has not been studied.

In this paper, we study safety-critical control synthesis for a sampled-data system with unknown dynamics. We address the challenges introduced due to the sampled-data system and unknown dynamics by developing a set control barrier function (CBF) constraints at each sampling time. A CBF constraint is an inequality that is imposed on the control signal, whose satisfaction implies forward invariance. We estimate the constructed CBF constraints by bounding the reachable set during each sampling period and calculating an interval that contains the system dynamics, leveraging the Lipschitz continuity assumption on the system dynamics. By satisfying these CBF constraints at each sampling time, the system is guaranteed to be safe for all time. To summarize, this paper makes the following contributions.  
\begin{itemize}
    \item We construct a CBF constraint for the unknown sampled-data system. We provide a sufficient condition for satisfying the CBF constraint by bounding the set of reachable states for each sampling period and calculating an interval which contains the unknown system dynamics.
    \item We formulate a non-convex optimization problem subject to the constructed CBF constraints to calculate the ZOH control signal for each sampling period. We solve the non-convex optimization problem by proposing a two-stage approach which only involves convex programs.
    \item We prove that the synthesized controller ensures the system is safe with respect to the given safe set.
    \item We validate our proposed framework using a numerical case study on a DC motor. We show that our proposed approach ensures the safety of the DC motor.
\end{itemize}

The remainder of this paper is organized as follows. We review the related work and preliminary background in Section \ref{sec:related} and Section \ref{sec:preliminary}, respectively. The system model and problem formulation are presented in Section \ref{sec:formulation}. We give the solution approach in Section \ref{sec:sol}, and illustrate the proposed approach using a numerical case study in Section \ref{sec:simulation}. Section \ref{sec:conclusion} concludes this paper.
%%%%%%%%%%%%%%%%%%%%%%%%%%%%%%%%%%%%%%%
\section{Related Work}\label{sec:related}
%%%%%%%%%%%%%%%%%%%%%%%%%%%%%%%%%%%%%%%
Multiple approaches have been proposed for safety-critical control synthesis for CPSs with known dynamics, including Hamilton-Jacobi-Bellman-Isaacs (HJI) equation \cite{tomlin1998conflict}, mixed-integer program \cite{mellinger2012mixed}, and control barrier function (CBF) and control Lyapunov function (CLF) -based methodologies \cite{ames2016control,wang2017safety,cohen2020approximate}. The CBF-based methods formulate a quadratic program to calculate the controller, assuming the system state remains unchanged during the discrete time interval. 

For sampled-data systems, CBF-based approaches have shown great success \cite{singletary2020control,cortez2019control,breeden2021control}. The authors of \cite{singletary2020control} and \cite{breeden2021control} focus on sampled-data systems with known dynamics. In \cite{cortez2019control}, a sampled-data system with an additive disturbance is studied. In this work, we consider a sampled-data system with unknown dynamics. CBF-based methods normally require the knowledge of the system dynamics to calculate the CBF constraint. The unknown dynamics lead to a scenario where calculating the CBF constraints is not feasible, and thus makes the approaches proposed in \cite{singletary2020control,cortez2019control,breeden2021control} not applicable.

% System identification techniques have been studied to address systems with unknown dynamics. In particular, Koopman theory has been shown to be capable of learning the unknown system by lifting the finite dimensional dynamics to an infinite dimensional observable space \cite{schmid2010dynamic,williams2015data}. In \cite{folkestad2020data}, Koopman theory and CBF are applied for control synthesis of unknown system, under the assumptions that the system is incrementally stable and the sampling period is sufficiently small. \textcolor{blue}{In this work, we formally quantify the sampling period design and guarantee the system is safe while neither assumption in \cite{folkestad2020data} is needed.} 
Learning-based control algorithms have been proposed to address systems that contain unknown uncertainties. Recent works have demonstrated the success of CBF- and/or CLF- based method along with learning algorithms \cite{jagtap2020control,cheng2019end,choi2020reinforcement,taylor2020learning}. This category of approaches leverages the forward invariance and stability properties provided by barrier functions and Lyapunov functions, respectively. However, they assume that there exists a well-calibrated model of the unknown system \cite{berkenkamp2017safe} and a safe backup controller to recover from failure \cite{mannucci2017safe}. 
Moreover, CBF-based learning approaches have to handle the tradeoff between the overly constrained learned model and failure to capture the true dynamics.

Reachable set learning aims at learning the set of reachable states of the system so as to compute a controller that gives no intersection between the reachable states and the unsafe region \cite{fisac2018general}. The computation of reachable sets relies on numerically solving HJI equations, which incurs high computational complexity and poor scalability \cite{Gillula2011guaranteed,Gillula2012guaranteed}. A Gaussian Process based reachability analysis is proposed in \cite{Akametalu2014reachability} to compute the reachable set. Compared with forward reachable set computation, region-of-attraction focuses on computing the set of states starting from which the system is guaranteed to be safe \cite{berkenkamp2017safe}. These learning-based approaches focus on either continuous-time or discrete-time systems. In this work, we study safety-critical control synthesis for the unknown sampled-data system.

% Data-driven approaches have been proposed to learn the reachable set of an unknown system \cite{devonport2020data,haesaert2017data}. However, no safety constraint is considered in \cite{devonport2020data,haesaert2017data}. In this work, we propose a data-driven framework for safety-critical system with unknown dynamics. The proposed framework provides safety guarantee in the worst-case scenario rather than a probabilistic guarantee as in the aforementioned works. 

\section{Preliminary Background}\label{sec:preliminary}
\subsection{Control Barrier Function}
A continuous function $\alpha:[0,a)\mapsto[0,\infty)$ belongs to class $\mathcal{K}$ if it is strictly increasing and $\alpha(0)=0$. A continuous function $\alpha:[-b,a)\mapsto(-\infty,\infty)$ is said to belong to extended class $\mathcal{K}$ if it is strictly increasing and $\alpha(0)=0$ for some $a,b>0$.

Consider a continuous-time control-affine system 
\begin{equation}\label{eq:dynamic}
    \dot{x}_t=f(x_t)+g(x_t)u_t
\end{equation}
where $x_t\in\mathcal{X}\subseteq\mathbb{R}^n$ is the system state and $u_t\in\mathcal{U}\subseteq\mathbb{R}^m$ is input provided by the controller. Vector-valued and matrix-valued functions $f(x_t)$ and $g(x_t)$ are of appropriate dimensions. Let a safe set $\mathcal{C}$ be defined as
\begin{equation}\label{eq:safe set}
    \mathcal{C}=\{x\in\mathcal{X}:h(x)\geq0\},
\end{equation}
where $h:\mathcal{X}\mapsto\mathbb{R}$ is a continuously differentiable function. We say system \eqref{eq:dynamic} is safe with respect to $\mathcal{C}$ if $x_t\in\mathcal{C}$ for all time $t\geq 0$.

CBF-based approaches have been used to guarantee safety of system \eqref{eq:dynamic} with respect to safe set $\mathcal{C}$. We give the definition of zeroing CBF as follows.
\begin{definition}[Zeroing CBF (ZCBF) \cite{ames2016control}]\label{def:ZCBF}
Consider a dynamical system \eqref{eq:dynamic} and a continuously differentiable function $h:\mathcal{X}\mapsto\mathbb{R}$. If there exists a locally Lipschitz extended class $\mathcal{K}$ function $\alpha$ such that for all $x\in\mathcal{X}$ the following inequality holds
\begin{equation*}
    \sup_{u\in\mathcal{U}}\bigg\{\frac{\partial h(x)}{\partial x}f(x)+\frac{\partial h(x)}{\partial x}g(x)u
    +\alpha(h(x))\bigg\}\geq 0,
\end{equation*}
then function $h$ is a ZCBF.
\end{definition}
In this paper, we focus on ZCBF. Extensions to higher relative degree CBFs are subject to our future work. A sufficient condition for the safety guarantee can be derived using ZCBF as follows.
\begin{lemma}[\cite{ames2016control}]\label{thm:ZCBF}
Given a dynamical system \eqref{eq:dynamic} and a safe set \eqref{eq:safe set} defined by some continuously differentiable function $h:\mathcal{X}\mapsto\mathbb{R}$, if $h$ is a ZCBF defined on $\mathcal{X}$, then $\mathcal{C}$ is forward invariant.
\end{lemma}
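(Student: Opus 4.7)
The plan is to derive forward invariance of $\mathcal{C}$ from the ZCBF inequality by combining a selection argument for the admissible control with a scalar comparison lemma applied to $h(x_t)$. First, I would use Definition~\ref{def:ZCBF} to show that for every $x \in \mathcal{X}$ the admissible input set
\begin{equation*}
    K(x) = \left\{ u \in \mathcal{U} : \frac{\partial h(x)}{\partial x} f(x) + \frac{\partial h(x)}{\partial x} g(x) u + \alpha(h(x)) \geq 0 \right\}
\end{equation*}
is nonempty, since the supremum defining the ZCBF condition is achieved or approached by some feasible $u$. Because $K(x)$ is defined by a single affine constraint in $u$, it is a closed convex set, and standard Lipschitz selection results (e.g.\ the pointwise min-norm selection used in the CBF literature) give a locally Lipschitz feedback $u(x) \in K(x)$, which in turn guarantees existence and uniqueness of closed-loop trajectories.

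Next, along any solution $x_t$ of the closed-loop system starting from $x_0 \in \mathcal{C}$, the chain rule and the fact that $u(x_t) \in K(x_t)$ give the differential inequality
\begin{equation*}
    \dot{h}(x_t) = \frac{\partial h(x_t)}{\partial x}\bigl(f(x_t) + g(x_t) u(x_t)\bigr) \geq -\alpha(h(x_t)).
\end{equation*}
I would then introduce the scalar comparison system $\dot{y}_t = -\alpha(y_t)$ with $y_0 = h(x_0) \geq 0$. Since $\alpha$ is extended class $\mathcal{K}$ with $\alpha(0)=0$, the origin is an equilibrium of the comparison system and $y_t \geq 0$ is preserved for all $t \geq 0$. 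Applying the comparison lemma to $h(x_t)$ and $y_t$ yields $h(x_t) \geq y_t \geq 0$, so $x_t \in \mathcal{C}$ for all $t \geq 0$, establishing forward invariance.

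The main obstacle is the regularity step needed to pass from the pointwise existence of a feasible $u \in K(x)$ to a closed-loop trajectory on which the differential inequality can be integrated; this is the place where the local Lipschitzness of $\alpha$ and the continuity of $f,g$ are really used. A cleaner alternative I would consider, if the selection argument proves cumbersome, is to invoke Nagumo's theorem directly: the ZCBF inequality implies that at every boundary point $x \in \partial \mathcal{C}$ (where $h(x)=0$) there exists $u \in \mathcal{U}$ with $\tfrac{\partial h}{\partial x}(f(x)+g(x)u) \geq 0$, which is exactly the subtangentiality condition characterizing forward invariance of $\mathcal{C}$.
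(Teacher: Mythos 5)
The paper offers no proof of this lemma: it is imported verbatim from \cite{ames2016control}, so there is no in-paper argument to compare against. Your proposal reconstructs essentially the standard proof from that reference --- nonemptiness of the admissible set $K(x)$, a locally Lipschitz selection, the differential inequality $\dot{h}(x_t)\geq -\alpha(h(x_t))$ along closed-loop trajectories, and the comparison lemma against $\dot{y}=-\alpha(y)$, whose solutions from $y_0\geq 0$ remain nonnegative because $\alpha(0)=0$ and $\alpha$ is locally Lipschitz (so the equilibrium at the origin cannot be crossed) --- and that argument is correct. Two small points of care. First, the supremum in Definition~\ref{def:ZCBF} being nonnegative yields a feasible $u$ only if it is attained or strictly positive; since the constraint is affine in $u$, attainment holds when $\mathcal{U}$ is compact (as assumed later in the paper), but the general statement quietly relies on this, as does the original reference. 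Second, your Nagumo alternative, as phrased, gives only viability: the existence of \emph{some} $u\in\mathcal{U}$ satisfying the subtangentiality condition at each boundary point guarantees that \emph{some} trajectory remains in $\mathcal{C}$, not that the set is invariant under every admissible feedback; to conclude forward invariance of a specific closed loop via Nagumo you must first fix the selection $u(x)\in K(x)$ and verify subtangentiality for the resulting vector field. The comparison-lemma route you lead with is therefore the one to keep, and it is the one the cited work actually uses.
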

Using Lemma \ref{thm:ZCBF}, one can solve for the controller at each time using a quadratic program \cite{ames2016control}
\begin{subequations}\label{eq:QP}
\begin{align}
    \min_u~&u^\top R(x) u + Q(x)^\top u\\
    \st~&\frac{\partial h(x)}{\partial x}f(x)+\frac{\partial h(x)}{\partial x}g(x)u
    +\alpha(h(x))\geq 0\label{eq:ZCBF def}\\
    &u\in\mathcal{U}
\end{align}
\end{subequations}
where $R(x)\in\mathbb{R}^m$ is positive definite and $Q(x)\in\mathbb{R}^{m}$. 

\subsection{Notations}
Let $x$ be a vector and $f(x)$ be a vector-valued function, we use $x_j$ and $f_j(x)$ to denote their $j$-th component, respectively. Let $A$ be a matrix. We use $A_{i,j}$ to denote its element at $i$-th row and $j$-th column. Let $x_t$ be a vector at time $t$. We use $x_{t,j}$ to denote the $j$-th component of $x_t$. Comparisons between vectors are implemented element-wise. Bold symbols are used to represent intervals.
%%%%%%%%%%%%%%%%%%%%%%%%%%%%%%%%%%%
\section{Problem Formulation}\label{sec:formulation}
%%%%%%%%%%%%%%%%%%%%%%%%%%%%%%%%%%%

Consider a continuous-time control-affine system in the form of \eqref{eq:dynamic}. The system contains uncertainties and hence $f(x_t)$ and $g(x_t)$ are unknown. We define a feedback controller $\mu:\mathcal{X}\mapsto\mathcal{U}$ to be a function that maps the system state to a control input. Given the current system state $x_t$ at time $t$ and a feedback controller $\mu$, we denote the system state at time $t'$ as $\varphi^{t'}(x_t,\mu)$. The system is given a safe set as defined in \eqref{eq:safe set}.

We consider the sampled-data implementation of system \eqref{eq:dynamic}. That is, the system is sampled using a sampling period $\Delta t$. Only the system states $x_{z\Delta t}$ at the sampling time are known, where $z=0,1,\ldots$. At each sampling time, a zero-order hold (ZOH) feedback controller $\mu(x_{z\Delta t})$ is applied to system \eqref{eq:dynamic}. In other words, $u_t=\mu(x_{z\Delta t})$ for all $t\in[z\Delta t,(z+1)\Delta t)$.

We have a data set of system \eqref{eq:dynamic} as side information. Let $K\in\mathbb{N}$ and $K\geq 1$. We denote a finite set of $K$ samples of state-input pairs as $R_K=\{(x_{t_k},u_{t_k},x_{t_{k+1}})\}_{k=1}^K$ where $x_{t_{k+1}}=\varphi^{t_{k+1}}(x_{t_k},\mu)$. Here $\mu$ represents a zero-order hold (ZOH) input $u_t=u_{t_k}$ for all time $t\in[t_k,t_{k+1})$. We assume that $h(x_{t_k})\geq 0$ for all $k=1,\ldots,K$. 

In the following, we formally state our assumptions.
\begin{assumption}\label{assump:Lipschitz}
We assume that functions $f_j(x)$ and $g_{j,s}(x)$ are Lipschitz continuous with Lipschitz constants $L_{f_j}$ and $L_{g_{j,s}}$, respectively, for all $j=1,\ldots,n$ and $s=1,\ldots,m$. The Lipschitz constants are known. We assume that $\sup_{x\in\mathcal{C}}\|f(x)\|$ and $\sup_{x\in\mathcal{C}}\|g(x)\|$ are given.
\end{assumption}
Lipschitz continuity is a commonly made assumption for reachability and safety analysis \cite{taylor2020learning,wang2017safety,khalil2002nonlinear}. The assumption on the bound of system dynamics is often seen in the worst-case safety analysis \cite{fisac2018general,berkenkamp2017safe}.
\begin{assumption}\label{assump:compact}
We assume that the safe set $\mathcal{C}$ and control input set $\mathcal{U}$ are compact. Additionally, control input set $\mathcal{U}$ is convex.
\end{assumption}
The problem studied in this work is as follows.

\begin{problem}\label{prob:worst-case safety}
Given a finite set of samples $R_K=\{(x_{t_k},u_{t_k})\}_{k=1}^K$ for some $K\in\mathbb{N}$ generated by implementing a given control input $u_{t_k}$ for all time $t\in[t_k,t_{k+1})$ to system \eqref{eq:dynamic} whose dynamics are unknown, synthesize a ZOH feedback controller $\mu$ such that system \eqref{eq:dynamic} is safe with respect to set $\mathcal{C}=\{x:h(x)\geq 0\}$.
\end{problem}
%%%%%%%%%%%%%%%%%%%%%%%%%%%%%%%%%%%%%
\section{Solution Approach}\label{sec:sol}
%%%%%%%%%%%%%%%%%%%%%%%%%%%%%%%%%%%%%

Our solution approach leverages Lemma \ref{thm:ZCBF} to guarantee safety of the system. We first construct a CBF constraint for the unknown sampled-data system to ensure the safety. Then we calculate a bound for the unknown system dynamics to evaluate the constructed CBF constraint. Finally, we formulate an optimization problem to solve for the control signal at each sample time.

\subsection{Construction of CBF Constraints for Unknown Systems}
When the system model is known and the system state is observable for all time $t\geq 0$, safety-critical synthesis can be achieved efficiently using quadratic program \eqref{eq:QP}. We consider the sampled-data system with unknown dynamics, which makes it difficult to evaluate the constraint given in \eqref{eq:ZCBF def}. In this subsection, we construct a CBF constraint that can be evaluated at each sampling time for unknown sampled-data system to guarantee that \eqref{eq:ZCBF def} holds for all time $t\in[z\Delta t,(z+1)\Delta t)$ for each sampling period $z=0,1,\ldots$, and hence guarantee system safety. 

Inspired by \cite{cortez2019control}, for any $t\in[z\Delta t,(z+1)\Delta t)$, we define
\begin{multline}\label{eq:CBF error}
    e(x_t,x_{z\Delta t},u_{z\Delta t})=\frac{\partial h(x_{z\Delta t})}{\partial x}[f(x_{z\Delta t})+g(x_{z\Delta t})u_{z\Delta t}]
    \\+\alpha(h(x_{z\Delta t}))
    -\frac{\partial h(x_t)}{\partial x}[f(x_t)-g(x_t)u_{z\Delta t}]-\alpha(h(x_t)).
\end{multline}
The definition given in \eqref{eq:CBF error} models the difference between the CBF constraints evaluated at states $x_{z\Delta t}$ and $x_t$ when control input $u_{z\Delta t}$ is applied. Given \eqref{eq:CBF error}, we have that
\begin{subequations}\label{eq:CBF with margin}
\begin{align}
    &\frac{\partial h(x_t)}{\partial x}f(x_t)+\frac{\partial h(x_t)}{\partial x}g(x_t)u_t+\alpha(h(x_t))\nonumber\\
    =&\frac{\partial h(x_{z\Delta t})}{\partial x}f(x_{z\Delta t})+\frac{\partial h(x_{z\Delta t})}{\partial x}g(x_{z\Delta t})u_{z\Delta t}
    \nonumber\\
    &\quad\quad\quad\quad\quad\quad+\alpha(h(x_{z\Delta t}))-e(x_t,x_{z\Delta t},u_{z\Delta t})\\
    \geq &\frac{\partial h(x_{z\Delta t})}{\partial x}f(x_{z\Delta t})+\frac{\partial h(x_{z\Delta t})}{\partial x}g(x_{z\Delta t})u_{z\Delta t}
    \nonumber\\
    &\quad+\alpha(h(x_{z\Delta t}))-\max_{x_t,x_{z\Delta t},u_{z\Delta t}}|e(x_t,x_{z\Delta t},u_{z\Delta t})|
\end{align}
\end{subequations}
If we can guarantee that the right-hand side of \eqref{eq:CBF with margin} is non-negative, then safety of system \eqref{eq:dynamic} holds by Lemma \ref{thm:ZCBF}. We define the following quantities:
\begin{equation}\label{eq:theta}
    \theta(u) = \sqrt{\sum_{j=1}^n\left(L_{f_j}+\sum_{s=1}^mL_{g_{j,s}}|u_s|\right)^2},\quad\Theta=\max_{u\in\mathcal{U}}\theta(u).
\end{equation}
The existence of $\Theta$ is guaranteed by Assumption \ref{assump:compact}. In the following, we bound $\max_{x_t,x_{z\Delta t},u_{z\Delta t}}|e(x_t,x_{z\Delta t},u_{z\Delta t})|$ from above to calculate a lower bound for \eqref{eq:CBF with margin}.
\begin{lemma}\label{lemma:error bound}
Let $L_\alpha$ and $L_h$ be the Lipschitz constants of functions $\alpha$ and $h$, respectively. Let $\Theta$ be defined as in \eqref{eq:theta}. Then for any given $x_{z\Delta t}$, $x_t$, and $u_{z\Delta t}$, we have
\begin{multline}
    |e(x_t,x_{z\Delta t},u_{z\Delta t})|\leq (L_h\Theta+L_\alpha)\|x_{z\Delta t}-x_t\|_2\\    
    +2L_h\|f(x_{z\Delta t})+g(x_{z\Delta t})u_{z\Delta t}\|_2.
\end{multline}
\end{lemma}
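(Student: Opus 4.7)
The plan is to peel off the different types of terms in \eqref{eq:CBF error} with repeated triangle inequalities and bound each piece using only the ingredients of Assumption~\ref{assump:Lipschitz}. Writing $F(x) := f(x) + g(x)u_{z\Delta t}$, a first triangle split yields
\[
|e| \leq \Big|\tfrac{\partial h(x_{z\Delta t})}{\partial x}F(x_{z\Delta t}) - \tfrac{\partial h(x_t)}{\partial x}F(x_t)\Big| + \bigl|\alpha(h(x_{z\Delta t})) - \alpha(h(x_t))\bigr|,
\]
reducing the task to bounding a CBF-derivative difference and an $\alpha$-difference.

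The $\alpha$-difference is handled immediately by Lipschitz continuity of $\alpha$ and $h$, contributing the $L_\alpha\|x_{z\Delta t}-x_t\|_2$ piece of the lemma's first summand. For the CBF-derivative difference, the key observation is that Assumption~\ref{assump:Lipschitz} supplies Lipschitz constants for $h$, $f$, and the entries of $g$, but \emph{not} for $\partial h/\partial x$. I would therefore resist the instinct to add and subtract $\tfrac{\partial h(x_t)}{\partial x}F(x_{z\Delta t})$, which would require a Lipschitz bound on $\nabla h$, and instead bound each inner product individually by $\|\partial h/\partial x\| \leq L_h$, a consequence of $L_h$-Lipschitzness of $h$. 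This gives
\[
\Big|\tfrac{\partial h(x_{z\Delta t})}{\partial x}F(x_{z\Delta t}) - \tfrac{\partial h(x_t)}{\partial x}F(x_t)\Big| \leq L_h\|F(x_{z\Delta t})\|_2 + L_h\|F(x_t)\|_2,
\]
and a further triangle step $\|F(x_t)\|_2 \leq \|F(x_{z\Delta t})\|_2 + \|F(x_t)-F(x_{z\Delta t})\|_2$ produces the factor $2L_h\|f(x_{z\Delta t})+g(x_{z\Delta t})u_{z\Delta t}\|_2$ appearing in the lemma, leaving a single residual drift term $L_h\|F(x_t)-F(x_{z\Delta t})\|_2$ still to control.

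The remaining technical step is the coordinatewise Lipschitz estimate $\|F(x_t)-F(x_{z\Delta t})\|_2 \leq \Theta\|x_t-x_{z\Delta t}\|_2$, which is exactly why the weighted quantity $\theta(u)$ in \eqref{eq:theta} is designed: bounding
\[
|F_j(x_t)-F_j(x_{z\Delta t})| \leq \Big(L_{f_j}+\sum_{s=1}^m L_{g_{j,s}}|u_{z\Delta t,s}|\Big)\|x_t-x_{z\Delta t}\|_2,
\]
then squaring, summing over $j$, and taking a square root delivers exactly $\theta(u_{z\Delta t})\|x_t-x_{z\Delta t}\|_2 \leq \Theta\|x_t-x_{z\Delta t}\|_2$. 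Combining the three estimates yields the claim. The main obstacle is the CBF-derivative difference: because $\nabla h$ is not assumed to be Lipschitz, one cannot chain the two inner products in a single clean step, and the price of this loose-but-unavoidable bound is precisely the additive $2L_h\|f(x_{z\Delta t})+g(x_{z\Delta t})u_{z\Delta t}\|_2$ summand in the statement.
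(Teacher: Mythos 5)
Your proof is correct and lands on exactly the same three estimates as the paper's: the term $2L_h\|f(x_{z\Delta t})+g(x_{z\Delta t})u_{z\Delta t}\|_2$ from the uniform bound $\|\partial h/\partial x\|\leq L_h$, the term $L_h\Theta\|x_{z\Delta t}-x_t\|_2$ from the coordinatewise Lipschitz estimate (which is precisely Proposition~\ref{prop:derivative bound} in the appendix), and the $L_\alpha$ term. The only difference is bookkeeping: the paper adds and subtracts $\frac{\partial h(x_t)}{\partial x}\bigl[f(x_{z\Delta t})+g(x_{z\Delta t})u_{z\Delta t}\bigr]$ and bounds the gradient difference by $\bigl\|\frac{\partial h(x_{z\Delta t})}{\partial x}-\frac{\partial h(x_t)}{\partial x}\bigr\|\leq 2L_h$, whereas you split the two inner products outright and instead add and subtract inside $\|F(x_t)\|_2$. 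These are algebraically interchangeable and yield identical constants. One correction to your stated motivation: the paper's add-and-subtract does \emph{not} require a Lipschitz bound on $\nabla h$ --- it only uses the triangle inequality together with the same uniform bound $\|\partial h/\partial x\|\leq L_h$ that you use, so your route is not avoiding any extra hypothesis; it is the same argument with the cross term inserted in a different place. (Both your proof and the paper's share the minor imprecision that chaining the Lipschitz constants of $\alpha$ and $h$ literally gives $L_\alpha L_h\|x_{z\Delta t}-x_t\|_2$ rather than $L_\alpha\|x_{z\Delta t}-x_t\|_2$, so $L_\alpha$ should be read as the Lipschitz constant of the composition $\alpha\circ h$.)
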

\begin{proof}
We bound $|e(x_t,x_{z\Delta t},u_{z\Delta t})|$ via
\begin{subequations}\label{eq:e bound 1}
\begin{align}
    &|e(x_t,x_{z\Delta t},u_{z\Delta t})|\nonumber\\
    =&\Big|\frac{\partial h(x_{z\Delta t})}{\partial x}[f(x_{z\Delta t})+g(x_{z\Delta t})u_{z\Delta t}]+\alpha(h(x_{z\Delta t}))\nonumber\\
    &-\frac{\partial h(x_t)}{\partial x}[f(x_t)-g(x_t)u_{z\Delta t}]-\alpha(h(x_t))\Big|\label{eq:e bound 1-1}\\
    =&\Big|\frac{\partial h(x_{z\Delta t})}{\partial x}[f(x_{z\Delta t})+g(x_{z\Delta t})u_{z\Delta t}]\nonumber\\
    &\quad-\frac{\partial h(x_t)}{\partial x}[f(x_{z\Delta t})+g(x_{z\Delta t})u_{z\Delta t}]+\frac{\partial h(x_t)}{\partial x}[f(x_{z\Delta t})\nonumber\\
    &\quad+g(x_{z\Delta t})u_{z\Delta t}]-\frac{\partial h(x_t)}{\partial x}[f(x_t)+g(x_t)u_{z\Delta t}]\nonumber\\
    &\quad\quad\quad\quad+\alpha(h(x_{z\Delta t}))-\alpha(h(x_t))\Big|\label{eq:e bound 1-2}\\
    \leq&\Big|\left(\frac{\partial h(x_{z\Delta t})}{\partial x}-\frac{\partial h(x_t)}{\partial x}\right)[f(x_{z\Delta t})+g(x_{z\Delta t})u_{z\Delta t}]\Big|\nonumber\\
    &+\Big|\frac{\partial h(x_t)}{\partial x}[f(x_{z\Delta t})+g(x_{z\Delta t})u_{z\Delta t}-f(x_t)-g(x_t)u_{z\Delta t}]\Big|\nonumber\\
    &\quad\quad\quad\quad\quad+|\alpha(h(x_{z\Delta t}))-\alpha(h(x_t))|\label{eq:e bound 1-3}
\end{align}
\end{subequations}
where \eqref{eq:e bound 1-1} holds by definition given in \eqref{eq:CBF error}, \eqref{eq:e bound 1-2} holds by adding and subtracting $\frac{\partial h(x_t)}{\partial x}[f(x_{z\Delta t})+g(x_{z\Delta t})u_{z\Delta t}]$, and \eqref{eq:e bound 1-3} holds by triangle inequality.

Since function $h$ is continuously differentiable, we have that $\|\frac{\partial h(x)}{\partial x}\|\leq L_h$, where $L_h$ is the Lipschitz constant of $h$. Hence we have that
\begin{multline}\label{eq:e bound 2}
    \Big|\left(\frac{\partial h(x_{z\Delta t})}{\partial x}-\frac{\partial h(x_t)}{\partial x}\right)[f(x_{z\Delta t})+g(x_{z\Delta t})u_{z\Delta t}]\Big|\\
    \leq 2L_h\|f(x_{z\Delta t})+g(x_{z\Delta t})u_{z\Delta t}\|_2.
\end{multline}

By the boundedness of $\frac{\partial h(x)}{\partial x}$ and Proposition \ref{prop:derivative bound} in the Appendix, we have that 
\begin{multline}\label{eq:e bound 3}
    \Big|\frac{\partial h(x_t)}{\partial x}[f(x_{z\Delta t})+g(x_{z\Delta t})u_{z\Delta t}-f(x_t)-g(x_t)u_{z\Delta t}]\Big|\\
    \leq L_h\Theta\|x_{z\Delta t}-x_t\|_2.
\end{multline}

Due to Lipschitz continuity of $\alpha(\cdot)$, we have that
\begin{equation}\label{eq:e bound 4}
    |\alpha(h(x_{z\Delta t}))-\alpha(h(x_t))|\leq L_\alpha\|x_{z\Delta t}-x_t\|_2.
\end{equation}

Substituting \eqref{eq:e bound 2} - \eqref{eq:e bound 4} into \eqref{eq:e bound 1-3} yields the lemma.
\end{proof}

Using Lemma \ref{lemma:error bound}, we can construct a CBF constraint as
\begin{multline}\label{eq:CBF with margin 1}
    \frac{\partial h(x_{z\Delta t})}{\partial x}f(x_{z\Delta t})+\frac{\partial h(x_{z\Delta t})}{\partial x}g(x_{z\Delta t})u_{z\Delta t}\\
    +\alpha(h(x_{z\Delta t}))-(L_h\Theta+L_\alpha)\|x_{z\Delta t}-x_t\|_2\\
    -2L_h\|f(x_{z\Delta t})+g(x_{z\Delta t})u_{z\Delta t}\|_2\geq 0.
\end{multline}
Using \eqref{eq:CBF with margin}, we have that if \eqref{eq:CBF with margin 1} holds, then Lemma \ref{thm:ZCBF} holds for all $t\in[z\Delta t,(z+1)\Delta t)$. However, since the system is unknown, we cannot compute $x_t$ and $\|f(x_{z\Delta t})+g(x_{z\Delta t})u_{z\Delta t}\|_2$, and thus we cannot calculate constraint \eqref{eq:CBF with margin 1}. In the subsequent subsection, we address this challenge.

\subsection{Sufficient Condition for Satisfying CBF Constraint \eqref{eq:CBF with margin 1}}

Since we consider sampled-data systems, only system states at time $z\Delta t$ with $z=0,1,\ldots$ are observable. Hence, $x_t$ in \eqref{eq:CBF with margin 1} is not known. Moreover, the term $\|f(x_{z\Delta t})+g(x_{z\Delta t})u_{z\Delta t}\|_2$ is not known since the system is unknown. In this subsection, we present how to estimate $x_t$ and $\|f(x_{z\Delta t})+g(x_{z\Delta t})u_{z\Delta t}\|_2$ to calculate the constructed CBF constraint given in \eqref{eq:CBF with margin 1}.

\subsubsection{Estimate System State $x_t$ During Sampling Period}

Although it is impractical to forward integrate the unknown dynamics to calculate $x_t$, we can bound $\|x_{z\Delta t}-x_t\|_2$. Using such bound and the observed system state $x_{z\Delta t}$, we can bound $x_t$ during each sampling period. We define
\begin{equation}\label{eq:beta}
    \beta=\sup_{x\in\mathcal{C},u\in\mathcal{U}} (\|f(x)+g(x)u\|).
\end{equation}
The existence of $\beta$ is guaranteed by Assumption \ref{assump:compact}. We can now bound $\|x_{z\Delta t}-x_t\|_2$ using the following proposition.
\begin{proposition}\label{prop:time bound}
Let $x_t\in\mathcal{X}$ and $\mu$ be a controller that specifies the control signal $u_t$ applied to system \eqref{eq:dynamic} for each time $t\in[T,T+\Delta t]$. We have that
\begin{equation}
    \left\|\varphi^t(x_t,\mu)-\varphi^T(x_t,\mu)\right\|_2\leq \frac{\|\beta\|_2}{\Theta}(e^{\Theta\Delta t}-1),
\end{equation}
where $\Theta$ is given in \eqref{eq:theta}, and $\beta$ is given in \eqref{eq:beta}.
\end{proposition}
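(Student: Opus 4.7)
The plan is to reduce the claim to a standard Gronwall-type integral inequality for the difference between the trajectory at time $t$ and its initial state at time $T$. Write $x(s)=\varphi^{s}(x_{t},\mu)$ and $u(s)=\mu(x_{z\Delta t})$ on $[T,T+\Delta t]$. Since the trajectory satisfies \eqref{eq:dynamic}, I would start from the fundamental-theorem-of-calculus identity
\[
x(t)-x(T)=\int_{T}^{t}\bigl[f(x(s))+g(x(s))u(s)\bigr]\,ds,
\]
apply the norm and the triangle inequality, and then split the integrand by adding and subtracting $f(x(T))+g(x(T))u(s)$.

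Next, I would bound each resulting piece. The first piece, $\|f(x(T))+g(x(T))u(s)\|_{2}$, is at most $\beta$ by the definition \eqref{eq:beta}, so its contribution is $\beta(t-T)$. The second piece, $\|f(x(s))+g(x(s))u(s)-f(x(T))-g(x(T))u(s)\|_{2}$, is controlled by Proposition \ref{prop:derivative bound} from the Appendix, which gives the mean-value-type Lipschitz bound $\theta(u(s))\,\|x(s)-x(T)\|_{2}\le\Theta\,\|x(s)-x(T)\|_{2}$. Letting $\delta(s)=\|x(s)-x(T)\|_{2}$, these two bounds combine to the integral inequality
\[
\delta(t)\;\le\;\beta(t-T)\;+\;\Theta\int_{T}^{t}\delta(s)\,ds.
\]

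Finally, I would conclude by solving this linear integral inequality. Setting $G(t)=\beta(t-T)+\Theta\int_{T}^{t}\delta(s)\,ds$ gives $G'(t)\le\beta+\Theta\,G(t)$ with $G(T)=0$, so multiplying by the integrating factor $e^{-\Theta t}$ and integrating from $T$ to $t$ yields $G(t)\le\frac{\beta}{\Theta}\bigl(e^{\Theta(t-T)}-1\bigr)$. Since $\delta(t)\le G(t)$ and $t-T\le\Delta t$, monotonicity of the exponential delivers exactly the stated bound $\frac{\beta}{\Theta}\bigl(e^{\Theta\Delta t}-1\bigr)$.

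The only nontrivial step is the Lipschitz splitting in the integrand, and that is essentially an appeal to Proposition \ref{prop:derivative bound}; existence of $\beta$ and $\Theta$ needed to carry out the argument is guaranteed by Assumptions \ref{assump:Lipschitz} and \ref{assump:compact}. One small care point is that the trajectory must stay in the compact set $\mathcal{C}$ (or at least in a region where $\beta$ and $\Theta$ remain valid bounds) over the sampling window; this is implicit in the sampled-data construction since the CBF constraint is being enforced precisely to keep the state in $\mathcal{C}$, and I would invoke it as a standing assumption for the duration of the interval.
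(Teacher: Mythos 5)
Your proposal is correct and follows essentially the same route as the paper: the fundamental-theorem-of-calculus identity, the add-and-subtract of $f(x_T)+g(x_T)u$, the Lipschitz bound from Proposition \ref{prop:derivative bound} together with the definition of $\beta$, and a Gr\"{o}nwall-type argument. The only cosmetic differences are that you replace $\theta(u_\tau)$ by $\Theta$ before invoking Gr\"{o}nwall and prove the constant-coefficient Gr\"{o}nwall bound directly via an integrating factor rather than citing it, which if anything is cleaner than the paper's explicit evaluation of the iterated integral.
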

\begin{proof}
By definition of $\varphi^t(x,\mu)$, we have that
\begin{subequations}\label{eq:time bound 1}
\begin{align}
    &\left\|\varphi^t(x_t,\mu)-\varphi^T(x_t,\mu)\right\|_2\nonumber\\
    =&\Big\|\int_T^t\left(f(x_\tau)+g(x_\tau)u_\tau\right)\dd\tau\Big\|_2\label{eq:time bound 1-1}\\
    \leq&\Big\|\int_T^t\left\{f(x_\tau)+g(x_\tau)u_\tau-f(x_T)-g(x_T)u_\tau\right\}\dd\tau\Big\|_2\nonumber\\
    &\quad\quad\quad\quad\quad+\left\|\int_T^t\left(f(x_T)+g(x_T)u_\tau\right)\dd\tau\right\|_2\label{eq:time bound 1-2}\\
    \leq&\int_T^t\theta(u_\tau)\|x_\tau-x_T\|_2\dd\tau+\|\beta\|_2(t-T).\label{eq:time bound 1-3}
\end{align}
\end{subequations}
where \eqref{eq:time bound 1-1} holds by definition of $\varphi^t(x_t,\mu)$, \eqref{eq:time bound 1-2} holds by triangle inequality, and \eqref{eq:time bound 1-3} holds by Proposition \ref{prop:derivative bound} in the Appendix and \eqref{eq:beta}.

Applying Gr\"{o}nwall's inequality \cite{bellman1943stability} to \eqref{eq:time bound 1} yields that
\begin{subequations}\label{eq:time bound 2}
\begin{align}
    &\left\|\varphi^t(x_t,\mu)-\varphi^T(x_t,\mu)\right\|_2\nonumber\\
    \leq&\int_T^t\theta(u_\tau)\|x_\tau-x_T\|_2\dd\tau+\|\beta\|_2(t-T)\label{eq:time bound 2-1}\\
    \leq&\|\beta\|_2(t-T)\nonumber\\
    &+\int_T^t\|\beta\|_2(t-T) \theta(u_\tau)\exp{\left(\int_\tau^t\theta(u_l)\dd l\right)} \dd\tau\label{eq:time bound 2-2}\\
    \leq &\|\beta\|_2(t-T)+\|\beta\|_2\Theta\int_T^t(\tau-T)e^{\Theta(t-\tau)} \dd\tau\label{eq:time bound 2-3}\\
    \leq& \frac{\|\beta\|_2}{\Theta}(e^{\Theta\Delta t}-1)\label{eq:time bound 2-4}
\end{align}
\end{subequations}
where \eqref{eq:time bound 2-1} holds by \eqref{eq:time bound 1}, \eqref{eq:time bound 2-2} holds by Gr\"{o}nwall's inequality \cite{bellman1943stability}, 
\eqref{eq:time bound 2-3} holds by \eqref{eq:theta} and calculating the inner integration, and \eqref{eq:time bound 2-4} holds by integration by parts and $t\in[T,T+\Delta t]$.
\end{proof}
Proposition \ref{prop:time bound} is closely related to \cite[Thm 3.4]{khalil2002nonlinear}. In \cite[Thm 3.4]{khalil2002nonlinear}, an upper bound of the distance between two nonlinear systems is established for each time, while Proposition \ref{prop:time bound} presents an upper bound of the distance between reachable states during a sampling period for the sampled-data system.

By Proposition \ref{prop:time bound}, we have that $\|x_{z\Delta t}-x_t\|_2\leq\frac{\|\beta\|_2}{\Theta}(e^{\Theta\Delta t}-1)$. For a sampled-data system, state $x_{z\Delta t}$ can be observed. Therefore, we can calculate a bound for $x_t$ for all $t\in[z\Delta t,(z+1)\Delta t)$ for all non-negative integer $z$.

\subsubsection{Estimate the Unknown System Dynamics}
In the following, we calculate a bound of $\|f(x_{z\Delta t})+g(x_{z\Delta t})u_{z\Delta t}\|_2$ for the unknown system. We define
\begin{equation}\label{eq:gamma}
     \gamma(u,u')=\sup_{x\in\mathcal{C}}\|g(x)\|\|u-u'\|_2,    
\end{equation}
for all $u,u'\in\mathcal{U}$, and develop the following result.
\begin{proposition}\label{prop:oracle evaluation}
Let $x_{z\Delta t}\in \mathcal{X}$, $\Theta$ be given in \eqref{eq:theta}, and $\beta$ be given in \eqref{eq:beta}. We have the following relation:
\begin{multline}
    \|f(x_{z\Delta t})+g(x_{z\Delta t})u_{z\Delta t}-f(x)-g(x)u\|_2\\
    \leq \theta(u)\|x_{z\Delta t}-x\|_2+\gamma(u_{z\Delta t},u),
\end{multline}
where $\gamma(u_{z\Delta t},u)$ is defined in \eqref{eq:gamma}.
\end{proposition}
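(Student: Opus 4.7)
The plan is to reduce the proposition to two pieces via a carefully chosen add-and-subtract trick, then bound each piece with a Lipschitz-type argument and an operator-norm inequality respectively.

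First I would rewrite the left-hand side by inserting $\pm g(x_{z\Delta t})u$ so as to isolate a ``same-input'' dynamics difference and a ``same-state'' input difference:
\begin{equation*}
f(x_{z\Delta t})+g(x_{z\Delta t})u_{z\Delta t}-f(x)-g(x)u
= \bigl[f(x_{z\Delta t})+g(x_{z\Delta t})u - f(x)-g(x)u\bigr] + g(x_{z\Delta t})\bigl[u_{z\Delta t}-u\bigr].
\end{equation*}
Applying the triangle inequality then yields two terms to be controlled separately. The key design choice here is that I insert $g(x_{z\Delta t})u$ (as opposed to $g(x)u_{z\Delta t}$); this is what produces the factor $\theta(u)$ rather than $\theta(u_{z\Delta t})$ on the right-hand side, matching the claimed bound exactly.

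For the first bracket, I would invoke Proposition \ref{prop:derivative bound} from the appendix (the same tool used in the proof of Lemma \ref{lemma:error bound} and in \eqref{eq:time bound 1-3}), which by Assumption \ref{assump:Lipschitz} gives the Lipschitz bound on $x \mapsto f(x)+g(x)u$ with constant $\theta(u)$ as defined in \eqref{eq:theta}. This delivers
\begin{equation*}
\bigl\|f(x_{z\Delta t})+g(x_{z\Delta t})u - f(x)-g(x)u\bigr\|_2 \le \theta(u)\,\|x_{z\Delta t}-x\|_2.
\end{equation*}
For the second bracket, the sub-multiplicative property of the matrix-vector norm gives $\|g(x_{z\Delta t})[u_{z\Delta t}-u]\|_2 \le \|g(x_{z\Delta t})\|\,\|u_{z\Delta t}-u\|_2$, and since $x_{z\Delta t}\in\mathcal{C}$ this is dominated by $\sup_{x\in\mathcal{C}}\|g(x)\|\,\|u_{z\Delta t}-u\|_2 = \gamma(u_{z\Delta t},u)$ by definition \eqref{eq:gamma}. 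Summing the two bounds gives the proposition.

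There is no real obstacle here; the only subtlety is the choice of splitting, which must be done in the way above to match the stated right-hand side (the $\theta(u)$ vs.\ $\theta(u_{z\Delta t})$ distinction). The quantities $\Theta$ and $\beta$ mentioned in the hypothesis are not actually invoked in the bound; they appear to be listed for consistency with how the proposition is later used inside the CBF constraint \eqref{eq:CBF with margin 1}.
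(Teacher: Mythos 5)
Your proof is correct and follows essentially the same route as the paper's: insert $\pm g(x_{z\Delta t})u$, apply the triangle inequality, bound the same-input difference by $\theta(u)\|x_{z\Delta t}-x\|_2$ via Proposition \ref{prop:derivative bound}, and bound the input-difference term by $\gamma(u_{z\Delta t},u)$ from definition \eqref{eq:gamma}. Your side remark that $\Theta$ and $\beta$ are not actually used in the bound is also accurate.
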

\begin{proof}
We have that
\begin{subequations}\label{eq:oracle evaluation 1}
\begin{align}
    &\|f(x_{z\Delta t})+g(x_{z\Delta t})u_{z\Delta t}-f(x)-g(x)u\|_2\nonumber\\
    =&\|f(x_{z\Delta t})+g(x_{z\Delta t})u-f(x)-g(x)u\nonumber\\
    &\quad\quad\quad-g(x_{z\Delta t})u+g(x_{z\Delta t})u_{z\Delta t}\|_2\label{eq:oracle evaluation 1-1}\\
    \leq&\theta(u)\|x_{z\Delta t}-x\|_2+\|g(x_{z\Delta t})(u_{z\Delta t}-u_{t_k})\|_2\label{eq:oracle evaluation 1-2}\\
    \leq & \theta(u)\|x_{z\Delta t}-x\|_2+\gamma(u_{z\Delta t},u)\label{eq:oracle evaluation 1-3}
\end{align}
\end{subequations}
where \eqref{eq:oracle evaluation 1-1} holds by adding and subtracting term $g(x_{z\Delta t})u_{t_k}$, \eqref{eq:oracle evaluation 1-2} holds by triangle inequality and Proposition \ref{prop:derivative bound} in the Appendix, and \eqref{eq:oracle evaluation 1-3} holds by the fact that $\gamma(u_{z\Delta t},u_{t_k})\geq\|g(x_{z\Delta t})(u_{z\Delta t}-u_{t_k})\|_2$.
\end{proof}

Proposition \ref{prop:oracle evaluation} implies that once the value of $f(x)+g(x)u$ is known for some $x\in\mathcal{X}$ and $u\in\mathcal{U}$, we are able to calculate the range of $f(x_{z\Delta t})+g(x_{z\Delta t})u_{z\Delta t}$. In the following, we show how to construct $f(x)+g(x)u$ so as to bound $f(x_{z\Delta t})+g(x_{z\Delta t})u_{z\Delta t}$.
\begin{lemma}\label{lemma:oracle evaluation}
Let $x_{t_k},x_{t_{k+1}}\in R_K$ be two sample data points. We can construct a vector $\dot{x}$ entry-wise as
\begin{equation}\label{eq:x_t range}
    \dot{x}_{j}\doteq f_j(x) + (g(x)u_{t_k})_j=\frac{x_{t_{k+1},j}-x_{t_k,j}}{t_{k+1}-t_k}.
\end{equation}
Then the system dynamics $f(x_{z\Delta t})+g(x_{z\Delta t})u_{z\Delta t}$ satisfies
\begin{multline}\label{eq:x derivative bound}
    f(x_{z\Delta t})+g(x_{z\Delta t})u_{z\Delta t}\in
    \dot{\mathbf{x}}\\
    +\Big(\theta(u_{t_{k}})\|x_{z\Delta t}-x_{t_k}\|_2+
    \frac{\theta(u_{t_{k}})\sqrt{n}\|\beta\|_2}{\Theta}\left(e^{\Theta(t_{k+1}-t_k)}-1\right)\\
    +\gamma(u_{z\Delta t},u_{t_k})\Big)[-1,1]^n,
\end{multline}
where $\dot{\mathbf{x}}=[\dot{x},\dot{x}]$ is a thin interval.
\end{lemma}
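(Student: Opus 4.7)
The plan is to interpret $\dot{x}$ as the time-average of the true closed-loop dynamics over the sampling window $[t_k,t_{k+1}]$ and then to bound the deviation of $f(x_{z\Delta t})+g(x_{z\Delta t})u_{z\Delta t}$ from that average by stitching together Proposition \ref{prop:time bound} (which controls state drift within a sampling period) and Proposition \ref{prop:oracle evaluation} (which controls dynamics drift between two state-input pairs). Since $x_{t_{k+1}}$ is generated by the ZOH input $u_{t_k}$, the fundamental theorem of calculus gives $x_{t_{k+1}}-x_{t_k}=\int_{t_k}^{t_{k+1}}(f(x_\tau)+g(x_\tau)u_{t_k})\,d\tau$, so
\[
\dot{x}\;=\;\frac{1}{t_{k+1}-t_k}\int_{t_k}^{t_{k+1}}\bigl(f(x_\tau)+g(x_\tau)u_{t_k}\bigr)\,d\tau,
\]
and, equivalently, the componentwise mean value theorem for integrals furnishes, for each $j$, some $\tau_j^\star\in[t_k,t_{k+1}]$ with $\dot{x}_j=f_j(\tilde{x}_j)+(g(\tilde{x}_j)u_{t_k})_j$ where $\tilde{x}_j:=x_{\tau_j^\star}$.

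First I would apply Proposition \ref{prop:oracle evaluation} with $x=x_\tau$ and $u=u_{t_k}$ for each $\tau\in[t_k,t_{k+1}]$, obtaining $\|f(x_{z\Delta t})+g(x_{z\Delta t})u_{z\Delta t}-f(x_\tau)-g(x_\tau)u_{t_k}\|_2 \le \theta(u_{t_k})\|x_{z\Delta t}-x_\tau\|_2+\gamma(u_{z\Delta t},u_{t_k})$. Next, splitting $\|x_{z\Delta t}-x_\tau\|_2\le \|x_{z\Delta t}-x_{t_k}\|_2+\|x_{t_k}-x_\tau\|_2$ and invoking Proposition \ref{prop:time bound} on the interval $[t_k,t_{k+1}]$ with the ZOH input $u_{t_k}$ produces the uniform-in-$\tau$ bound $\|x_{t_k}-x_\tau\|_2\le (\|\beta\|_2/\Theta)(e^{\Theta(t_{k+1}-t_k)}-1)$. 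Chaining the two estimates gives a uniform-in-$\tau$ control on $\|f(x_{z\Delta t})+g(x_{z\Delta t})u_{z\Delta t}-(f(x_\tau)+g(x_\tau)u_{t_k})\|_2$ of exactly the structure appearing in the lemma.

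Finally I would convert this bound into the componentwise box inclusion $f(x_{z\Delta t})+g(x_{z\Delta t})u_{z\Delta t}\in\dot{\mathbf{x}}+r[-1,1]^n$ by observing that for every coordinate $j$,
\[
\bigl|(f(x_{z\Delta t})+g(x_{z\Delta t})u_{z\Delta t})_j-\dot{x}_j\bigr| \;=\; \bigl|(f(x_{z\Delta t})+g(x_{z\Delta t})u_{z\Delta t})_j-f_j(\tilde{x}_j)-(g(\tilde{x}_j)u_{t_k})_j\bigr|,
\]
which is dominated by the uniform estimate above evaluated at $\tau=\tau_j^\star$. The main obstacle is precisely this last step: because $\dot{x}$ is \emph{not} the dynamics evaluated at any single point (the componentwise MVT produces up to $n$ distinct intermediate states $\tilde{x}_j$), Proposition \ref{prop:oracle evaluation} cannot be applied to $\dot{x}$ directly. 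Aggregating the $n$ separate Proposition \ref{prop:time bound} drift bounds attached to the $\tilde{x}_j$'s into a single Euclidean-norm radius is the natural source of the $\sqrt{n}$ inflation of the drift term in the stated bound. The Lipschitz and Gr\"onwall-type work has already been packaged by Propositions \ref{prop:time bound} and \ref{prop:oracle evaluation}, so once this bookkeeping is handled the remainder of the argument is routine.
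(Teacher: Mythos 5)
Your proposal is correct and follows essentially the same route as the paper's proof: the componentwise mean value theorem to identify $\dot{x}_j$ with the dynamics at intermediate states, Proposition \ref{prop:oracle evaluation} to compare against $f(x_{z\Delta t})+g(x_{z\Delta t})u_{z\Delta t}$, a triangle-inequality split through $x_{t_k}$, and Proposition \ref{prop:time bound} to absorb the within-window drift, with $\sqrt{n}$ arising from aggregating the $n$ per-coordinate drift bounds. If anything, your coordinatewise bookkeeping is more careful than the paper's, which applies Proposition \ref{prop:oracle evaluation} to a single synthetic state $x$ assembled from the $n$ distinct intermediate times --- the very gap you correctly identify and repair.
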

\begin{proof}
By the mean value theorem, we have that there must exist a set of states $\{x_{\tau_j}\in\mathbb{R}^n:j=1,\ldots,n\}$ such that $\tau_j\in(t_k,t_{k+1})$ for all $j$ and the $j$-th entry of $x_{\tau_j}$, denoted as $x_{\tau_j,j}$, satisfies $x_{\tau_j,j}=\frac{x_{t_{k+1},j}-x_{t_k,j}}{t_{k+1}-t_k}$. Given the set of states $\{x_{\tau_j}\in\mathbb{R}^n:j=1,\ldots,n\}$, we can construct $\dot{x}$ entry-wise as $\dot{x}=[\dot{x}_1,\ldots,\dot{x}_n]^\top$, where $\dot{x}_j=x_{\tau_j,j}$ for each $j=1,\ldots,n$.

Let $\dot{x}$ be constructed as \eqref{eq:x_t range}. We define $\dot{\mathbf{x}}=[\dot{x},\dot{x}]$. Given $\dot{\mathbf{x}}$, we then show that \eqref{eq:x derivative bound} holds as follows.
\begin{subequations}\label{eq:x derivative bound 1}
\begin{align}
    &f(x_{z\Delta t})+g(x_{z\Delta t})u_{z\Delta t}\nonumber\\
    \in&\dot{\mathbf{x}}+(\theta(u_{t_k})\|x_{z\Delta t}-x\|_2+\gamma(u_{z\Delta t},u_{t_k}))[-1,1]^n\label{eq:x derivative bound 1-1}\\
    =&\dot{\mathbf{x}}+(\theta(u_{t_k})\|x_{z\Delta t}+x_{t_k}-x_{t_k}-x\|_2\nonumber\\
    &\quad\quad\quad\quad\quad\quad\quad+\gamma(u_{z\Delta t},u_{t_k}))[-1,1]^n\label{eq:x derivative bound 1-2}\\
    % \subseteq&\dot{\mathbf{x}}_t+(\theta(u_{t_k})\|x_{z\Delta t}+x_{t_k}-x_{t_k}-x_{t}\|_2\nonumber\\
    % &\quad\quad\quad\quad\quad\quad\quad+\gamma(u_{z\Delta t},u_{t_k}))[-1,1]^n\label{eq:x derivative bound 1-3}\\
    \subseteq&\dot{\mathbf{x}}+(\theta(u_{t_{k}})\|x_{z\Delta t}-x_{t_k}\|_2+\theta(u_{t_{k}})\|x_{t_k}-x\|_2\nonumber\\
    &\quad\quad\quad\quad\quad\quad\quad+\gamma(u_{z\Delta t},u_{t_k}))[-1,1]^n,\label{eq:x derivative bound 1-4}
\end{align}
\end{subequations}
where \eqref{eq:x derivative bound 1-1} holds by Proposition \ref{prop:oracle evaluation} and the fact that the sample data is generated using ZOH control input $u_t=u_{t_k}$ for all $t\in[t_k,t_{k+1})$, \eqref{eq:x derivative bound 1-2} holds by adding and subtracting term $x_{t_k}$, and \eqref{eq:x derivative bound 1-4} holds by triangle inequality. 

We prove \eqref{eq:x derivative bound} using the following relation
\begin{align*}\label{eq:x derivative bound 3}
    |x_{t_k,j}-x_{j}|&=|x_{t_k,j}-x_{\tau_j,j}|\nonumber\\
    &\leq \frac{\|\beta\|_2}{\Theta}\left(e^{\Theta(t_{k+1}-t_k)}-1\right),    
\end{align*}
where the equality holds by \eqref{eq:x_t range} and the inequality holds by Proposition \ref{prop:time bound}.
Therefore, we can bound $\|x_{t_k}-x\|_2$ as
\begin{equation}\label{eq:x derivative bound 4}
    \|x_{t_k}-x\|_2\leq \frac{\sqrt{n}\|\beta\|_2}{\Theta}\left(e^{\Theta(t_{k+1}-t_k)}-1\right).
\end{equation}
Combining \eqref{eq:x derivative bound 4} with \eqref{eq:x derivative bound 1} yields the lemma.
\end{proof}

% We make the following remark on the construction in \eqref{eq:x_t range}.
% \begin{remark}
% Lemma \ref{lemma:oracle evaluation} only suggests the existence of some vector given in \eqref{eq:x_t range}. However, this does not imply that state $x$ given in \eqref{eq:x_t range} lies along the trajectory of system \eqref{eq:dynamic} when applying control signal $u_{t_k}$. 
% \end{remark}

Proposition \ref{prop:time bound} and Lemma \ref{lemma:oracle evaluation} provides us the methods to estimate the CBF constraint given in \eqref{eq:CBF with margin 1}. In the next subsection, we present an optimization problem subject to the CBF constraint for safety-critical synthesis. 

\subsection{Safety-Critical Synthesis}
In this subsection, we first use Proposition \ref{prop:time bound} and Lemma \ref{lemma:oracle evaluation} to evaluate the CBF constraint given in \eqref{eq:CBF with margin 1}. We show that the safety-critical synthesis using the evaluated CBF constraint is formulated as a non-convex program. We decompose the non-convex program to two convex sub-problems, and present an efficient safety-critical synthesis.

We define
\begin{multline*}
    w(u_{z\Delta t},u_{t_k})=\theta(u_{t_{k}})\|x_{z\Delta t}-x_{t_k}\|_2\\
    +\frac{\sqrt{n}\theta(u_{t_{k}})\|\beta\|_2}{\Theta}\left(e^{\Theta(t_{k+1}-t_k)}-1\right)+\gamma(u_{z\Delta t},u_{t_k}).
\end{multline*}
Using Proposition \ref{prop:time bound} and Lemma \ref{lemma:oracle evaluation}, we have the following upper bound for $e(x_t,x_{z\Delta t},u_{z\Delta t})$ defined in \eqref{eq:CBF error}.
\begin{theorem}\label{thm:margin bound}
Let $\dot{x}\in\mathbb{R}^n$ be constructed as \eqref{eq:x_t range}, $\mathbf{1}=[1,\ldots,1]^\top\in\mathbb{R}^n$, and
\begin{multline}
    E(\Delta t,u_{z\Delta t}) = \frac{L_h\Theta\|\beta\|_2+L_\alpha}{\Theta}(e^{\Theta\Delta t}-1)
    \\+2L_h\max\{\|\dot{x}+w(u_{z\Delta t},u_{t_k})\mathbf{1}\|_2,\|\dot{x}-w(u_{z\Delta t},u_{t_k})\mathbf{1}\|_2\}.
\end{multline}
Then $|e(x_t,x_{z\Delta t},u_{z\Delta t})|\leq E(\Delta t,u_{z\Delta t})$ for all $u_{z\Delta t}\in\mathcal{U}$.
\end{theorem}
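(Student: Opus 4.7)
The strategy is to combine the three preceding results: Lemma~\ref{lemma:error bound} supplies a two-term decomposition of $|e|$, Proposition~\ref{prop:time bound} controls the state-deviation term, and Lemma~\ref{lemma:oracle evaluation} controls the unknown-dynamics norm term. Every ingredient needed is already in place; the task reduces to plugging them in carefully and doing some interval-arithmetic bookkeeping.

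I would begin by invoking Lemma~\ref{lemma:error bound} to obtain
$$|e(x_t, x_{z\Delta t}, u_{z\Delta t})| \leq (L_h\Theta + L_\alpha)\,\|x_{z\Delta t} - x_t\|_2 + 2L_h\,\|f(x_{z\Delta t}) + g(x_{z\Delta t}) u_{z\Delta t}\|_2,$$
and then bound the two summands separately. For the first summand, I would apply Proposition~\ref{prop:time bound} with $T = z\Delta t$, noting that $t \in [z\Delta t, (z+1)\Delta t)$ implies $t - T \leq \Delta t$; this gives the bound $\|x_{z\Delta t} - x_t\|_2 \leq \frac{\|\beta\|_2}{\Theta}(e^{\Theta\Delta t} - 1)$, which after routine algebraic rearrangement produces the $\frac{L_h\Theta\|\beta\|_2 + L_\alpha}{\Theta}(e^{\Theta\Delta t} - 1)$ summand in the definition of $E(\Delta t, u_{z\Delta t})$.

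For the second summand, I would invoke Lemma~\ref{lemma:oracle evaluation}, which places $f(x_{z\Delta t}) + g(x_{z\Delta t}) u_{z\Delta t}$ inside the interval box $\dot{\mathbf{x}} + w(u_{z\Delta t}, u_{t_k})[-1,1]^n$. I would then bound the 2-norm of any vector in this hyperrectangle by the worst-case norm over its vertices: coordinatewise one has $|y_j| \leq \max(|\dot{x}_j + w|,\, |\dot{x}_j - w|)$, and a sign/symmetry argument collapses these coordinatewise extrema to the two aligned corners, yielding $\max\{\|\dot{x} + w\mathbf{1}\|_2,\, \|\dot{x} - w\mathbf{1}\|_2\}$. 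Multiplying by $2L_h$ delivers the second summand of $E$; since all intermediate constants ($\Theta$, $\beta$, $\gamma$) are already defined as suprema over $\mathcal{U}$, the overall bound is uniform in $u_{z\Delta t} \in \mathcal{U}$, as required.

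The main obstacle I anticipate is the final interval-arithmetic step that collapses the supremum of $\|y\|_2$ over the full hyperrectangle $\dot{x} + w[-1,1]^n$ down to just the two aligned corners $\dot{x} \pm w\mathbf{1}$. In principle there are $2^n$ vertices to examine, and justifying that only two of them are needed requires either a coordinatewise sign-selection argument or a monotonicity/symmetry argument about how the 2-norm behaves as each $s_j \in \{-1,+1\}$ is flipped. The remainder of the proof is mechanical substitution of the earlier lemmas and does not pose significant technical difficulty.
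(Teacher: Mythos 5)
Your overall route is exactly the paper's: the paper's entire proof of this theorem is the single sentence ``The theorem follows from Lemma~\ref{lemma:error bound}, Proposition~\ref{prop:time bound}, and Lemma~\ref{lemma:oracle evaluation},'' so your three-ingredient assembly is the intended argument and you have filled in more detail than the paper provides. However, the step you yourself flag as the main obstacle is a genuine gap, and it is not one you can close as stated. The supremum of $\|y\|_2$ over the hyperrectangle $\dot{x}+w[-1,1]^n$ is attained at the vertex whose sign pattern matches that of $\dot{x}$ componentwise, i.e.\ it equals $\bigl\|\,|\dot{x}|+w\mathbf{1}\,\bigr\|_2$ with the absolute value taken entrywise. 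This collapses to $\max\{\|\dot{x}+w\mathbf{1}\|_2,\|\dot{x}-w\mathbf{1}\|_2\}$ only when all components of $\dot{x}$ share a sign. A concrete counterexample: take $n=2$, $\dot{x}=(1,-1)^\top$, $w=1$. The box is $[0,2]\times[-2,0]$, whose worst vertex $(2,-2)^\top$ has norm $2\sqrt{2}$, while $\|\dot{x}+w\mathbf{1}\|_2=\|(2,0)^\top\|_2=2$ and $\|\dot{x}-w\mathbf{1}\|_2=\|(0,-2)^\top\|_2=2$. So the two aligned corners underestimate the worst case, and the inequality $\|f(x_{z\Delta t})+g(x_{z\Delta t})u_{z\Delta t}\|_2\leq\max\{\|\dot{x}\pm w\mathbf{1}\|_2\}$ can fail. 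There is no ``sign/symmetry argument'' that rescues it in general; the correct uniform bound from the coordinatewise estimate $|y_j|\leq|\dot{x}_j|+w$ is $2L_h\bigl\|\,|\dot{x}|+w\mathbf{1}\,\bigr\|_2$, and the theorem (and the downstream constraints built on it) should be stated with that quantity. This defect is in the paper's statement as well --- its one-line proof silently skips the same step --- so you have correctly located the weak point even though your proposal does not repair it.

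One smaller bookkeeping issue: combining Lemma~\ref{lemma:error bound} with Proposition~\ref{prop:time bound} gives the first summand as $(L_h\Theta+L_\alpha)\frac{\|\beta\|_2}{\Theta}(e^{\Theta\Delta t}-1)=\frac{L_h\Theta\|\beta\|_2+L_\alpha\|\beta\|_2}{\Theta}(e^{\Theta\Delta t}-1)$, whereas the theorem writes $\frac{L_h\Theta\|\beta\|_2+L_\alpha}{\Theta}(e^{\Theta\Delta t}-1)$; the $\|\beta\|_2$ factor on $L_\alpha$ does not cancel, so your ``routine algebraic rearrangement'' does not actually land on the stated expression unless $\|\beta\|_2=1$. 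This is almost certainly a typo in the paper, but you should not paper over it with the phrase ``routine rearrangement'' --- either carry the $\|\beta\|_2$ through or note the discrepancy explicitly.
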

\begin{proof}
The theorem follows from Lemma \ref{lemma:error bound}, Proposition \ref{prop:time bound}, and Lemma \ref{lemma:oracle evaluation}.
\end{proof}

Using Theorem \ref{thm:margin bound}, we have the following result.
\begin{lemma}\label{lemma:CBF with margin}
Let $\dot{x}\in\mathbb{R}^n$ be constructed as \eqref{eq:x_t range} and $\mathbf{1}=[1,\ldots,1]^\top\in\mathbb{R}^n$. If a control signal $u\in\mathcal{U}$ satisfies the following set of relations
\begin{subequations}\label{eq:modified CBF constraints}
\begin{align}
    &\frac{\partial h(x_{z\Delta t})}{\partial x}(\dot{x}+w(u_{z\Delta t},u_{t_k})\mathbf{1})+\alpha(h(x_{z\Delta t}))\nonumber\\
    &\quad\quad\quad\quad\quad\quad\quad\quad\quad\quad-E(\Delta t,u_{z\Delta t})\geq 0\\
    &\frac{\partial h(x_{z\Delta t})}{\partial x}(\dot{x}-w(u_{z\Delta t},u_{t_k})\mathbf{1})+\alpha(h(x_{z\Delta t}))\nonumber\\
    &\quad\quad\quad\quad\quad\quad\quad\quad\quad\quad-E(\Delta t,u_{z\Delta t})\geq 0
\end{align}
\end{subequations}
then control signal $u$ satisfies 
\begin{equation*}
    \frac{\partial h(x_{z\Delta t})}{\partial x}(f(x_{z\Delta t})+g(x_{z\Delta t})u)+\alpha(h(x_{z\Delta t}))\geq 0.
\end{equation*}
\end{lemma}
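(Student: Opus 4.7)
The plan is to chain the two tools developed earlier in the section: use Lemma \ref{lemma:oracle evaluation} to replace the unknown quantity $f(x_{z\Delta t})+g(x_{z\Delta t})u$ by a computable interval centered at $\dot{x}$, and then exhibit the two conditions in \eqref{eq:modified CBF constraints} as lower bounds on the inner product of $\partial h(x_{z\Delta t})/\partial x$ with every point in that interval.

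First, I would apply Lemma \ref{lemma:oracle evaluation} with the candidate control $u$ playing the role of $u_{z\Delta t}$ there. This produces a vector $\lambda\in[-1,1]^n$ and a radius $w(u,u_{t_k})$ such that
\[
f(x_{z\Delta t})+g(x_{z\Delta t})u \;=\; \dot{x} + w(u,u_{t_k})\,\lambda.
\]
Substituting this into the quantity we want to certify yields
\[
\frac{\partial h(x_{z\Delta t})}{\partial x}\bigl(f(x_{z\Delta t})+g(x_{z\Delta t})u\bigr)+\alpha(h(x_{z\Delta t}))
= \frac{\partial h(x_{z\Delta t})}{\partial x}\dot{x} + w(u,u_{t_k})\,\frac{\partial h(x_{z\Delta t})}{\partial x}\lambda + \alpha(h(x_{z\Delta t})).
\]

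Second, I would use the two constraints in \eqref{eq:modified CBF constraints} as the interval endpoints corresponding to $\lambda=\mathbf{1}$ and $\lambda=-\mathbf{1}$. Because $E(\Delta t,u_{z\Delta t})\geq 0$ by definition in Theorem \ref{thm:margin bound}, each of those inequalities implies the non-negativity of the corresponding expression without the $E$ margin. I would then argue that the expression above, viewed as an affine function of $\lambda$, is bounded below on $[-1,1]^n$ by the minimum of its values on the two corners $\pm\mathbf{1}$ (this is where the interval interpretation that the paper consistently uses kicks in, treating $\partial h/\partial x\cdot\dot{\mathbf{x}}$ as a thin interval inflated by $w\,\partial h/\partial x\cdot[-1,1]^n\mathbf{1}$). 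Combining with the two inequalities in \eqref{eq:modified CBF constraints} then forces the desired non-negativity.

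The bookkeeping step I expect to be most delicate is the passage from the two corner conditions to a lower bound valid for an arbitrary $\lambda\in[-1,1]^n$; this is the one place where one must be careful about the sign pattern of the entries of $\partial h(x_{z\Delta t})/\partial x$ and invoke the interval-arithmetic convention used in \eqref{eq:x derivative bound} of Lemma \ref{lemma:oracle evaluation}. Once that monotonicity-style argument is settled, the remainder of the proof is just dropping the non-negative $E(\Delta t,u_{z\Delta t})$ margin and concluding that
\[
\frac{\partial h(x_{z\Delta t})}{\partial x}\bigl(f(x_{z\Delta t})+g(x_{z\Delta t})u\bigr)+\alpha(h(x_{z\Delta t}))\geq 0,
\]
which is the claim of the lemma.
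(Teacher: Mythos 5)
Your proposal follows essentially the same route as the paper's proof: invoke Lemma \ref{lemma:oracle evaluation} to place $f(x_{z\Delta t})+g(x_{z\Delta t})u$ in the box $\dot{x}+w(u,u_{t_k})[-1,1]^n$, use Theorem \ref{thm:margin bound} together with $E(\Delta t,u_{z\Delta t})\geq |e|\geq 0$ to discharge the margin, and read the two hypotheses as the corner evaluations $\lambda=\pm\mathbf{1}$.

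The step you flagged as delicate is in fact the genuine weak point, and the ``sign pattern'' worry does not go away. Writing $c=\partial h(x_{z\Delta t})/\partial x$, the minimum of the linear map $\lambda\mapsto c^\top\lambda$ over $[-1,1]^n$ is $-\sum_j|c_j|$, attained at the vertex with $\lambda_j=-\mathrm{sign}(c_j)$, whereas the two corners $\pm\mathbf{1}$ only certify the larger value $-\bigl|\sum_j c_j\bigr|\geq-\sum_j|c_j|$. Hence nonnegativity at $\lambda=\pm\mathbf{1}$ does not dominate the full box unless all entries of $c$ share a sign (true when $n=1$, or in the paper's case study where $h$ depends on a single coordinate, but not in general). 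A watertight version would impose the constraint at all $2^n$ sign-pattern vertices or, more cleanly, replace the term $\pm w\,c^\top\mathbf{1}$ by $-w\|c\|_1$ in a single constraint. That said, the paper's own proof makes exactly the same leap: it asserts $\frac{\partial h(x_{z\Delta t})}{\partial x}(f(x_{z\Delta t})+g(x_{z\Delta t})u)\in\frac{\partial h(x_{z\Delta t})}{\partial x}[\dot{x}-w\mathbf{1},\dot{x}+w\mathbf{1}]$ and treats the right-hand side as the scalar interval with endpoints $c^\top(\dot{x}\mp w\mathbf{1})$, which is the same two-corner shortcut. So your reconstruction is faithful to the published argument, including its imprecision; the modification above is what either version would need to be airtight.
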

\begin{proof}
By Theorem \ref{thm:margin bound}, we have $|e(x_t,x_{z\Delta t},u_{z\Delta t})|\leq E(\Delta t,u_{z\Delta t})$. Therefore, if \eqref{eq:modified CBF constraints} holds, then 
\begin{subequations}\label{eq:modified CBF constraints 1}
\begin{align}
    &\frac{\partial h(x_{z\Delta t})}{\partial x}(\dot{x}+w(u,u_{t_k})\mathbf{1})+\alpha(h(x_{z\Delta t}))\nonumber\\
    &\quad\quad\quad\quad\quad\quad\quad-|e(x_t,x_{z\Delta t},u)|\geq 0\\
    &\frac{\partial h(x_{z\Delta t})}{\partial x}(\dot{x}-w(u,u_{t_k})\mathbf{1})+\alpha(h(x_{z\Delta t}))\nonumber\\
    &\quad\quad\quad\quad\quad\quad\quad-|e(x_t,x_{z\Delta t},u)|\geq 0
\end{align}
\end{subequations}
Using Lemma \ref{lemma:oracle evaluation}, we have that
\begin{multline}
    \frac{\partial h(x_{z\Delta t})}{\partial x}(f(x_{z\Delta t})+g(x_{z\Delta t})u)\\
    \in\frac{\partial h(x_{z\Delta t})}{\partial x}[\dot{x}-w(u,u_{t_k})),\dot{x}+w(u,u_{t_k}))].
\end{multline}
Hence, \eqref{eq:modified CBF constraints 1} implies that
\begin{multline*}
    \frac{\partial h(x_{z\Delta t})}{\partial x}(f(x_{z\Delta t})+g(x_{z\Delta t})u)+\alpha(h(x_{z\Delta t}))\\
    -|e(x_t,x_{z\Delta t},u)|\geq 0
\end{multline*}
Note that $|e(x_t,x_{z\Delta t},u)|\geq 0$. Therefore, the lemma holds.
\end{proof}

Motivated by Lemma \ref{lemma:CBF with margin}, we can formulate the following optimization problem at each sampling time $z\Delta t$ for all $z$
\begin{subequations}\label{eq:QP with margin}
\begin{align}
    \min_{u}~&u^\top R(x_{z\Delta t}) u\label{eq:QP with margin obj}\\
    \st ~&\frac{\partial h(x_{z\Delta t})}{\partial x}(\dot{x}+w(u,u_{t_k})\mathbf{1})+\alpha(h(x_{z\Delta t}))\nonumber\\
    &\quad\quad\quad\quad\quad\quad\quad\quad\quad\quad-E(\Delta t,u)\geq 0\label{eq:QP with margin constr 1}\\
    &\frac{\partial h(x_{z\Delta t})}{\partial x}(\dot{x}-w(u,u_{t_k})\mathbf{1})+\alpha(h(x_{z\Delta t}))\nonumber\\
    &\quad\quad\quad\quad\quad\quad\quad\quad\quad\quad-E(\Delta t,u)\geq 0\label{eq:QP with margin constr 2}\\
    &u\in\mathcal{U}\label{eq:QP with margin constr 3}
\end{align}
\end{subequations}
where $R(x_{z\Delta t})\in\mathbb{R}^m$ is a positive definite matrix. According to \eqref{eq:CBF with margin} and Lemma \ref{lemma:CBF with margin}, we have that if a control signal $u\in\mathcal{U}$ satisfying \eqref{eq:QP with margin} at sampling time $z\Delta t$ is implemented during sampling period $[z\Delta t,(z+1)\Delta t)$, then 
\begin{equation*}
    \frac{\partial h(x_t)}{\partial x}(f(x_t)+g(x_t)u)+\alpha(h(x_t))\geq 0
\end{equation*}
holds for all time $t\in [z\Delta t,(z+1)\Delta t)$. Hence, we have the following safety guarantee.
\begin{theorem}\label{thm:safety}
Let $u_{z\Delta t}^*$ be the control signal that solves \eqref{eq:QP with margin}, then system \eqref{eq:dynamic} is safe for all time $t\in[z\Delta t,(z+1)\Delta t)$ by applying control signal $u_{z\Delta t}^*$ during sampling period $t\in[z\Delta t,(z+1)\Delta t)$.
\end{theorem}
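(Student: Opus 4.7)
The plan is to combine Lemma~\ref{lemma:CBF with margin} with the chain of inequalities in \eqref{eq:CBF with margin} and invoke Lemma~\ref{thm:ZCBF} period-by-period, using induction across sampling periods to extend safety to all time.

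First, I would fix an arbitrary sampling index $z$ and assume inductively that $x_{z\Delta t}\in\mathcal{C}$ (the base case $z=0$ holds by the standing assumption that samples lie in the safe set). Since $u_{z\Delta t}^*$ is feasible for \eqref{eq:QP with margin}, it satisfies both inequalities \eqref{eq:QP with margin constr 1}--\eqref{eq:QP with margin constr 2} and lies in $\mathcal{U}$, so Lemma~\ref{lemma:CBF with margin} gives
\begin{equation*}
\frac{\partial h(x_{z\Delta t})}{\partial x}\bigl(f(x_{z\Delta t})+g(x_{z\Delta t})u_{z\Delta t}^*\bigr)+\alpha(h(x_{z\Delta t}))\geq 0.
\end{equation*}

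Second, I would use the identity in \eqref{eq:CBF with margin} to rewrite the pointwise CBF expression at an arbitrary $t\in[z\Delta t,(z+1)\Delta t)$ as the sampled-time expression minus $e(x_t,x_{z\Delta t},u_{z\Delta t}^*)$. Combining Lemma~\ref{lemma:error bound}, Proposition~\ref{prop:time bound}, and Lemma~\ref{lemma:oracle evaluation} as assembled in Theorem~\ref{thm:margin bound} yields $|e(x_t,x_{z\Delta t},u_{z\Delta t}^*)|\le E(\Delta t,u_{z\Delta t}^*)$, and the margin $E(\Delta t,u_{z\Delta t}^*)$ is precisely the quantity subtracted in constraints \eqref{eq:QP with margin constr 1}--\eqref{eq:QP with margin constr 2}. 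Hence
\begin{equation*}
\frac{\partial h(x_t)}{\partial x}\bigl(f(x_t)+g(x_t)u_{z\Delta t}^*\bigr)+\alpha(h(x_t))\geq 0
\end{equation*}
for every $t\in[z\Delta t,(z+1)\Delta t)$. This is exactly the ZCBF inequality of Definition~\ref{def:ZCBF} for the trajectory under consideration, so Lemma~\ref{thm:ZCBF} yields forward invariance of $\mathcal{C}$ over the sampling period, giving $x_{(z+1)\Delta t}\in\mathcal{C}$ and closing the induction.

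The main obstacle will be the mild circularity in applying Theorem~\ref{thm:margin bound} on the open period: the suprema defining $\beta$, $\Theta$, and $\gamma$ are taken over $\mathcal{C}$, so the bound on $|e|$ is only valid as long as the trajectory has not exited $\mathcal{C}$. I would resolve this with a standard continuity/contradiction argument: since $h$ is continuous and $h(x_{z\Delta t})\geq 0$, if there existed a first escape time $t^{*}\in(z\Delta t,(z+1)\Delta t)$ with $h(x_{t^{*}})<0$, then on $[z\Delta t,t^{*}]$ the trajectory would still lie in $\mathcal{C}$ (up to the boundary) so all bounds would hold, the ZCBF inequality would remain valid on this interval, and Lemma~\ref{thm:ZCBF} would preclude $h$ from becoming negative at $t^{*}$, a contradiction. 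Thus the trajectory remains in $\mathcal{C}$ on the whole period, the induction propagates, and safety holds for all $t\geq 0$.
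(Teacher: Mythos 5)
Your proposal is correct and follows essentially the same route as the paper's proof: feasibility of $u_{z\Delta t}^*$ in \eqref{eq:QP with margin} together with Theorem~\ref{thm:margin bound} and the identity \eqref{eq:CBF with margin} yields the pointwise ZCBF inequality throughout the sampling period, and Lemma~\ref{thm:ZCBF} then gives forward invariance. The paper's version stops there; your added induction across sampling periods and the first-escape-time continuity argument addressing the circularity of the suprema taken over $\mathcal{C}$ are refinements the paper leaves implicit, and they tighten rather than change the argument.
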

\begin{proof}
Since $u_{z\Delta t}^*$ solves optimization problem \eqref{eq:QP with margin}, constraints \eqref{eq:QP with margin constr 1} and \eqref{eq:QP with margin constr 2} hold. By Theorem \ref{thm:margin bound}, we have that $|e(x_t,x_{z\Delta t},u_{z\Delta t})|\leq E(\Delta t,u_{z\Delta t})$ for all $u_{z\Delta t}\in\mathcal{U}$. Thus, constraints \eqref{eq:QP with margin constr 1} and \eqref{eq:QP with margin constr 2} imply that
\begin{multline*}
    \frac{\partial h(x_t)}{\partial x}(f(x_t)+g(x_t)u_{z\Delta t}^*)+\alpha(h(x_t))\\
    -|e(x_t,x_{z\Delta t},u_{z\Delta t}^*)|\geq 0.
\end{multline*}
Furthermore, by the definition of $e(x_t,x_{z\Delta t},u_{z\Delta t})$ given in \eqref{eq:CBF error} and relation given in \eqref{eq:CBF with margin}, we have that constraints 
\begin{multline*}
    \frac{\partial h(x_t)}{\partial x}(f(x_t)+g(x_t)u_{z\Delta t}^*)+\alpha(h(x_t))\\
    -|e(x_t,x_{z\Delta t},u_{z\Delta t}^*)|\geq 0
\end{multline*}
holds for all $t\in[z\Delta t,(z+1)\Delta t)$. Finally, applying Lemma \ref{thm:ZCBF} yields the desired result.
\end{proof}

By Theorem \ref{thm:safety}, safety-critical synthesis reduces to solving the optimization problem given in \eqref{eq:QP with margin}. We observe that the objective function \eqref{eq:QP with margin obj} is quadratic with respect to $u$. However, the constraints \eqref{eq:QP with margin constr 1} and \eqref{eq:QP with margin constr 2} are not convex with respect to $u$. Therefore, solving problem given in \eqref{eq:QP with margin} is nontrivial. In the following, we present a two-stage approach to solve for control input $u_{z\Delta t}$ at each sampling time $z\Delta t$ for all $z=0,1,\ldots$.

We define a slack variable $p=w(u,u_{t_k})$. By Assumption \ref{assump:compact}, we have $p\in[0,\Bar{p}(u_{t_k})]$, where $\Bar{p}(u_{t_k})=\max_{u\in\mathcal{U}}\|u-u_{t_k}\|_2$. Then constraints \eqref{eq:QP with margin constr 1} and \eqref{eq:QP with margin constr 2} are rewritten as
\begin{subequations}\label{eq:QP with margin constr convex}
\begin{align}
    &2L_h\|\dot{x}+p\mathbf{1}\|_2-\frac{\partial h(x_{z\Delta t})}{\partial x}p\mathbf{1}\leq \frac{\partial h(x_{z\Delta t})}{\partial x}\dot{x}\nonumber\\
    &\quad+\alpha(h(x_{z\Delta t}))-\frac{L_h\Theta\|\beta\|_2+L_\alpha}{\Theta}(e^{\Theta\Delta t}-1)\label{eq:QP with margin constr convex 1}\\
    &2L_h\|\dot{x}-p\mathbf{1}\|_2-\frac{\partial h(x_{z\Delta t})}{\partial x}p\mathbf{1}\leq \frac{\partial h(x_{z\Delta t})}{\partial x}\dot{x}\nonumber\\
    &\quad+\alpha(h(x_{z\Delta t}))-\frac{L_h\Theta\|\beta\|_2+L_\alpha}{\Theta}(e^{\Theta\Delta t}-1)\label{eq:QP with margin constr convex 2}\\
    &2L_h\|\dot{x}+p\mathbf{1}\|_2+\frac{\partial h(x_{z\Delta t})}{\partial x}p\mathbf{1}\leq \frac{\partial h(x_{z\Delta t})}{\partial x}\dot{x}\nonumber\\
    &\quad+\alpha(h(x_{z\Delta t}))-\frac{L_h\Theta\|\beta\|_2+L_\alpha}{\Theta}(e^{\Theta\Delta t}-1)\label{eq:QP with margin constr convex 3}\\
    &2L_h\|\dot{x}-p\mathbf{1}\|_2+\frac{\partial h(x_{z\Delta t})}{\partial x}p\mathbf{1}\leq \frac{\partial h(x_{z\Delta t})}{\partial x}\dot{x}\nonumber\\
    &\quad+\alpha(h(x_{z\Delta t}))-\frac{L_h\Theta\|\beta\|_2+L_\alpha}{\Theta}(e^{\Theta\Delta t}-1)\label{eq:QP with margin constr convex 4}
\end{align}
\end{subequations}
Here $\dot{x}$ is constructed using \eqref{eq:x_t range}. Since $\|\dot{x}-p\mathbf{1}\|_2$ is convex with respect to $p$ and $\frac{\partial h(x_{z\Delta t})}{\partial x}p\mathbf{1}$ is linear with respect to $p$, the set of constraints given by \eqref{eq:QP with margin constr convex} is convex with respect to $p$. Thus, we can use the following convex program to solve for $p$ at each sampling time $z\Delta t$
\begin{subequations}\label{eq:convex program}
\begin{align}
    \min_{p}~&1\\
    \st~&p\in[0,\Bar{p}(u_{t_k})]\label{eq:convex program constr 1}\\
    &\text{constraints given by }\eqref{eq:QP with margin constr convex}\label{eq:convex program constr 2}
\end{align}
\end{subequations}
Denote the solution to convex program \eqref{eq:convex program} as $p^*$. Next, we solve for $u_{z\Delta t}$ using $p^*$. Using the definition of $w(u_{z\Delta t},u_{t_k})$ and $\gamma(u_{z\Delta t},u_{t_k})$, we define $b^*=p^*-\theta(u_{t_{k}})\|x_{z\Delta t}-x_{t_k}\|_2-\frac{\sqrt{n}\theta(u_{t_{k}})\|\beta\|_2}{\Theta}\left(e^{\Theta(t_{k+1}-t_k)}-1\right)$. Then searching for $u_{z\Delta t}$ is equivalent to computing the intersection between $\mathcal{U}$ and a ball centered at $u_{t_k}$ with radius $b^*$.
We characterize this solution procedure using the following lemma.
\begin{lemma}\label{lemma:equivalence}
If a control signal $u^*$ is feasible to \eqref{eq:QP with margin}, then $p^*=w(u^*,u_{t_k})$ solves \eqref{eq:convex program}. If there exists some $p^*$ solves \eqref{eq:convex program} and there exists some $u^*\in\mathcal{U}$ that satisfies $w(u^*,u_{t_k})=p^*$, then $u^*$ is a feasible solution to \eqref{eq:QP with margin}.
\end{lemma}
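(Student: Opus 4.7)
My plan is to prove both directions of the equivalence by direct substitution, observing that the non-convex constraints \eqref{eq:QP with margin constr 1}--\eqref{eq:QP with margin constr 2} decompose exactly into the four convex constraints \eqref{eq:QP with margin constr convex 1}--\eqref{eq:QP with margin constr convex 4} under the change of variable $p = w(u,u_{t_k})$. Since \eqref{eq:convex program} has a constant objective, feasibility and optimality coincide for it, so both directions reduce to checking feasibility.

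The key algebraic observation is the following. Substituting the definition of $E(\Delta t,u)$ from Theorem \ref{thm:margin bound} into \eqref{eq:QP with margin constr 1} and moving the max term to the right-hand side gives
\begin{equation*}
\tfrac{\partial h(x_{z\Delta t})}{\partial x}(\dot{x}+w(u,u_{t_k})\mathbf{1})+\alpha(h(x_{z\Delta t}))-C \;\geq\; 2L_h\max\bigl\{\|\dot{x}+w(u,u_{t_k})\mathbf{1}\|_2,\|\dot{x}-w(u,u_{t_k})\mathbf{1}\|_2\bigr\},
\end{equation*}
where $C=\frac{L_h\Theta\|\beta\|_2+L_\alpha}{\Theta}(e^{\Theta\Delta t}-1)$. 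Using the elementary fact that $c\geq \max\{a,b\}$ iff $c\geq a$ and $c\geq b$, this single relation is equivalent to a pair of inequalities that, after rearrangement and replacing $w(u,u_{t_k})$ by $p$, become precisely \eqref{eq:QP with margin constr convex 1} and \eqref{eq:QP with margin constr convex 2}. The same decomposition applied to \eqref{eq:QP with margin constr 2} yields \eqref{eq:QP with margin constr convex 3} and \eqref{eq:QP with margin constr convex 4}.

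With this decomposition established, direction $(\Rightarrow)$ follows by setting $p^*=w(u^*,u_{t_k})$: the decomposition shows $p^*$ satisfies \eqref{eq:QP with margin constr convex}, and the range constraint \eqref{eq:convex program constr 1} holds because $u^*\in\mathcal{U}$ implies the associated value of $w$ lies in the admissible interval $[0,\bar{p}(u_{t_k})]$ (by definition of $\bar p$ and the fact that the other summands in $w$ are independent of $u$ and can be absorbed into the range). Since the objective of \eqref{eq:convex program} is the constant $1$, any feasible $p^*$ is optimal. Direction $(\Leftarrow)$ is the symmetric statement: if $p^*$ is feasible for \eqref{eq:convex program} and $u^*\in\mathcal{U}$ realizes $w(u^*,u_{t_k})=p^*$, then substituting $p^*$ into \eqref{eq:QP with margin constr convex} recovers the four inequalities obtained from the decomposition, which together are equivalent to \eqref{eq:QP with margin constr 1}--\eqref{eq:QP with margin constr 2}; with $u^*\in\mathcal{U}$ this certifies feasibility of $u^*$ for \eqref{eq:QP with margin}.

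The only delicate step is handling the $\max$ cleanly in both directions and confirming that the range $[0,\bar{p}(u_{t_k})]$ is consistent with the definition of $w$ so that feasibility of $u$ in $\mathcal{U}$ corresponds to feasibility of $p$ in the slack interval; once this bookkeeping is done, the remainder is mechanical substitution. No convexity, duality, or optimality-condition argument is required, because the outer objective \eqref{eq:QP with margin obj} plays no role in the statement, which is purely about feasibility of the reformulated constraints.
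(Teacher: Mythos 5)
Your proof is correct and follows essentially the same route as the paper's: both arguments are a direct substitution under the change of variable $p=w(u,u_{t_k})$, with your version making explicit the step the paper leaves implicit, namely that the $\max$ in $E(\Delta t,u)$ splits each of \eqref{eq:QP with margin constr 1}--\eqref{eq:QP with margin constr 2} into the corresponding pair among \eqref{eq:QP with margin constr convex 1}--\eqref{eq:QP with margin constr convex 4}. The bookkeeping caveat you raise about whether $[0,\bar{p}(u_{t_k})]$ truly contains the range of $w$ is a real imprecision in the paper's own setup (the constant summands and the $\sup_x\|g(x)\|$ factor in $\gamma$ are not reflected in $\bar{p}$), but your treatment of it matches the paper's level of rigor.
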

\begin{proof}
We first prove that if a control signal $u^*$ is feasible to \eqref{eq:QP with margin}, then there exists some $p^*=w(u^*,u_{t_k})$ such that $p^*$ solves \eqref{eq:convex program}. Since $u^*$ is a feasible solution to \eqref{eq:QP with margin}, constraints \eqref{eq:QP with margin constr 1} and \eqref{eq:QP with margin constr 2} hold. By the definition of $E(\Delta t,u^*)$, we observe that $p^*=w(u^*,u_{t_k})$ satisfies constraints in \eqref{eq:QP with margin constr convex}. Moreover, constraint \eqref{eq:convex program constr 1} is met since $u^*$ is feasible to \eqref{eq:QP with margin} and thus $u^*\in\mathcal{U}$. 

Next, we prove that if there exists some $p^*$ solves \eqref{eq:convex program} and there exists some $u^*\in\mathcal{U}$ that satisfies $w(u^*,u_{t_k})=p^*$, then $u^*$ is a feasible solution to \eqref{eq:QP with margin}. Using the definition of $E(\Delta t,u^*)$ and $p^*=w(u^*,u_{t_k})$, we have that constraints in \eqref{eq:QP with margin constr convex} hold implies that \eqref{eq:QP with margin constr 1} and \eqref{eq:QP with margin constr 2} hold. Additionally, we have $u^*\in\mathcal{U}$. Therefore, $u^*$ is a feasible solution to \eqref{eq:QP with margin}.

Combining the arguments above yields the lemma.
\end{proof}

By Theorem \ref{thm:safety} and Lemma \ref{lemma:equivalence}, we can compute a control signal with safety guarantee at each sampling time efficiently.

We conclude this section by discussing how the sampled-data implementation and unknown dynamics are incorporated in the proposed approach. To ensure that \eqref{eq:QP with margin} is feasible, we need 
\begin{subequations}\label{eq:sampling period analysis}
\begin{align}
    &\frac{L_h\Theta\|\beta\|_2+L_\alpha}{\Theta}(e^{\Theta\Delta t}-1)-\alpha(h(x_{z\Delta t}))\nonumber\\
    \leq &\frac{\partial h(x_{z\Delta t})}{\partial x}(\dot{x}+w(u,u_{t_k})\mathbf{1})-2L_h\nonumber\\
    &\cdot\max\{\|\dot{x}+w(u,u_{t_k})\mathbf{1}\|_2,\|\dot{x}-w(u,u_{t_k})\mathbf{1}\|_2\}\\
    &\frac{L_h\Theta\|\beta\|_2+L_\alpha}{\Theta}(e^{\Theta\Delta t}-1)-\alpha(h(x_{z\Delta t}))\nonumber\\
    \leq &\frac{\partial h(x_{z\Delta t})}{\partial x}(\dot{x}-w(u,u_{t_k})\mathbf{1})-2L_h\nonumber\\
    &\cdot\max\{\|\dot{x}+w(u,u_{t_k})\mathbf{1}\|_2,\|\dot{x}-w(u,u_{t_k})\mathbf{1}\|_2\}
\end{align}
\end{subequations}

The sampled-data implementation is captured by the term $\frac{L_h\Theta\|\beta\|_2+L_\alpha}{\Theta}(e^{\Theta\Delta t}-1)$ in \eqref{eq:sampling period analysis}. This term decreases when reducing the sampling period. When the sampling period approaches zero, we have $\frac{L_h\Theta\|\beta\|_2+L_\alpha}{\Theta}(e^{\Theta\Delta t}-1)\rightarrow 0$ since the sampled-data system approximates a continuous-time system controlled by continuous-time control signals. Additionally, reducing the sampling period $\Delta t$ is helpful when $h(x_{z\Delta t})\rightarrow 0$. The reason is that as $\Delta t\rightarrow 0$, the left-hand side of \eqref{eq:sampling period analysis} approaches zero, and thus convex program \eqref{eq:convex program} has largest feasible region.

The unknown system dynamics are captured by the terms $\dot{x}+w(u,u_{t_k})\mathbf{1}$ and $\dot{x}-w(u,u_{t_k})\mathbf{1}$ in \eqref{eq:sampling period analysis}. Reducing the sampling period will not make these terms vanish. However, we can make $w(u,u_{t_k})$ approach zero as the data set $R_K$ asymptotically covers the safe set $\mathcal{C}$. In this case, the interval derived in Lemma \ref{lemma:oracle evaluation} approaches a thin interval that only contains the system dynamics. Additionally, the data-driven methods may also help to reduce the conservativeness introduced by unknown dynamics, which is subject to our future work.

%%%%%%%%%%%%%%%%%%%%%%%%%%%%%%%%%%%%%%%%%%%%%%%%%%
\section{Numerical Case Study}\label{sec:simulation}
%%%%%%%%%%%%%%%%%%%%%%%%%%%%%%%%%%%%%%%%%%%%%%%%%%

\begin{figure}[!htp]
	\begin{center}
		\begin{tabular}{c}
			\scalebox{0.19}{\includegraphics[width=15in]{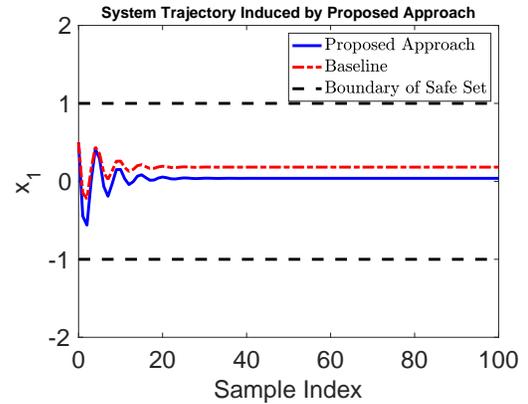}}
		\end{tabular}
		\caption{The trajectory induced by the proposed approach.}
		\label{fig:traj}
	\end{center}
\end{figure}

In this section, we present a numerical case study on the control synthesis for a DC motor. The system follows a control affine dynamics as follows \cite{daniel1998experimental}
\begin{equation*}
    \Dot{x}=\begin{bmatrix}
    -39.3153x_1 +19.1083 \\
    -1.6599x_2-3.3333
    \end{bmatrix}+\begin{bmatrix}
    -32.2293x_2\\
    22.9478x_1
    \end{bmatrix}u,
\end{equation*}
where $x_1$ is the rotor current, $x_2$ is the angular velocity, $u\in\mathcal{U}=[-4,4]$ is the stator current. The parameters in the system dynamics are unknown, and are normally obtained empirically in practice. The initial state is set as $x_0=[0.5,0.75]^\top$. The safe set of the motor is defined as $\mathcal{C}=\{x:1-x_1^2\geq 0\}$. We generate data set $R_K$ by randomly generating $200$ trajectories over $1000$ sampling periods, with the sampling period $\Delta t=0.01s$.

We compare our proposed approach with a baseline scenario. In the baseline scenario, the system dynamics is known. The control signal applied at each sampling time is calculated by solving the following quadratic program at each sampling time $z\Delta t$ for all $z=0,1,\ldots,$.
\begin{subequations}\label{eq:baseline}
\begin{align}
    \min_{u_{z\Delta t}}~&u_{z\Delta t}^2\\
    \st ~&\frac{\partial h(x)}{\partial x}f(x)+\frac{\partial h(x)}{\partial x}g(x)u_{z\Delta t}
    +\alpha(h(x))\geq 0\label{eq:baseline constr}
\end{align}
\end{subequations}

In Fig. \ref{fig:traj}, we present the trajectory generated using the proposed approach and the baseline. The trajectory generated using the proposed approach and the baseline are plotted using the blue solid line and the red dash-dotted line, respectively. We observe that both the proposed approach and the baseline guarantee safety of the system. However, the trajectories are not identical. There are two major reasons causing the trajectories to be different from each other. First, the system dynamics are not known when implementing the proposed approach, and thus the CBF constraints given in \eqref{eq:QP with margin constr convex} are used, while the baseline implements the CBF constraint given in \eqref{eq:baseline constr}. Second, the baseline aims at minimizing the energy used by the controller, i.e., solving quadratic program in \eqref{eq:baseline} at each sampling time, while the unknown system dynamics lead to a non-convex program presented in \eqref{eq:QP with margin}, which is solved by first solving the convex program in \eqref{eq:convex program} and then searching for control signal $u$ that satisfies Lemma \ref{lemma:equivalence}. We finally observe that the proposed approach provides some robustness compared with the baseline. That is, the trajectory generated using the proposed approach tends to stay further away from both boundaries of the safe set than the baseline.

%%%%%%%%%%%%%%%%%%%%%%%%%%%%%%%%%%%%%%%%%%%%%%%%%%
\section{Conclusion}\label{sec:conclusion}
%%%%%%%%%%%%%%%%%%%%%%%%%%%%%%%%%%%%%%%%%%%%%%%%%%

In this paper, we studied the problem of safety-critical control synthesis for sampled-data systems with unknown dynamics. We constructed a CBF constraint to guarantee the safety during each sampling period. We evaluated the CBF constraint at each sampling time by bounding the reachable state and the unknown system dynamics. We formulated a non-convex program subject to the CBF constraint to calculate the safe control input. We decomposed the non-convex program into two sub-problems with only convex programs involved. We proved the synthesized controller guarantees the safety of the unknown sampled-data system. Our proposed solution was evaluated using a numerical case study. For future work, we will incorporate a data-driven method to improve the bound of the CBF constraint by learning the nonlinear system dynamics.

\bibliographystyle{IEEEtran}
\bibliography{IEEEabrv,MyBib}

\begin{thebibliography}{10}
\providecommand{\url}[1]{#1}
\csname url@rmstyle\endcsname
\providecommand{\newblock}{\relax}
\providecommand{\bibinfo}[2]{#2}
\providecommand\BIBentrySTDinterwordspacing{\spaceskip=0pt\relax}
\providecommand\BIBentryALTinterwordstretchfactor{4}
\providecommand\BIBentryALTinterwordspacing{\spaceskip=\fontdimen2\font plus
\BIBentryALTinterwordstretchfactor\fontdimen3\font minus
  \fontdimen4\font\relax}
\providecommand\BIBforeignlanguage[2]{{%
\expandafter\ifx\csname l@#1\endcsname\relax
\typeout{** WARNING: IEEEtran.bst: No hyphenation pattern has been}%
\typeout{** loaded for the language `#1'. Using the pattern for}%
\typeout{** the default language instead.}%
\else
\language=\csname l@#1\endcsname
\fi
#2}}

\bibitem{knight2002safety}
J.~C. Knight, ``Safety critical systems: challenges and directions,'' in
  \emph{Proceedings of the 24th International Conference on Software
  Engineering}, 2002, pp. 547--550.

\bibitem{ames2016control}
A.~D. Ames, X.~Xu, J.~W. Grizzle, and P.~Tabuada, ``Control barrier function
  based quadratic programs for safety critical systems,'' \emph{IEEE
  Transactions on Automatic Control}, vol.~62, no.~8, pp. 3861--3876, 2016.

\bibitem{wang2017safety}
L.~Wang, A.~D. Ames, and M.~Egerstedt, ``Safety barrier certificates for
  collisions-free multirobot systems,'' \emph{IEEE Transactions on Robotics},
  vol.~33, no.~3, pp. 661--674, 2017.

\bibitem{cohen2020approximate}
M.~H. Cohen and C.~Belta, ``Approximate optimal control for safety-critical
  systems with control barrier functions,'' in \emph{IEEE Conference on
  Decision and Control}.\hskip 1em plus 0.5em minus 0.4em\relax IEEE, 2020, pp.
  2062--2067.

\bibitem{singletary2020control}
A.~Singletary, Y.~Chen, and A.~D. Ames, ``Control barrier functions for
  sampled-data systems with input delays,'' in \emph{IEEE Conference on
  Decision and Control}.\hskip 1em plus 0.5em minus 0.4em\relax IEEE, 2020, pp.
  804--809.

\bibitem{cortez2019control}
W.~S. Cortez, D.~Oetomo, C.~Manzie, and P.~Choong, ``Control barrier functions
  for mechanical systems: Theory and application to robotic grasping,''
  \emph{IEEE Transactions on Control Systems Technology}, 2019.

\bibitem{breeden2021control}
J.~Breeden, K.~Garg, and D.~Panagou, ``Control barrier functions in
  sampled-data systems,'' \emph{arXiv preprint arXiv:2103.03677}, 2021.

\bibitem{mannucci2017safe}
T.~Mannucci, E.-J. van Kampen, C.~de~Visser, and Q.~Chu, ``Safe exploration
  algorithms for reinforcement learning controllers,'' \emph{IEEE Transactions
  on Neural Networks and Learning Systems}, vol.~29, no.~4, pp. 1069--1081,
  2017.

\bibitem{folkestad2020data}
C.~Folkestad, Y.~Chen, A.~D. Ames, and J.~W. Burdick, ``Data-driven
  safety-critical control: Synthesizing control barrier functions with
  {K}oopman operators,'' \emph{IEEE Control Systems Letters}, 2020.

\bibitem{taylor2020learning}
A.~Taylor, A.~Singletary, Y.~Yue, and A.~Ames, ``Learning for safety-critical
  control with control barrier functions,'' in \emph{Learning for Dynamics and
  Control}.\hskip 1em plus 0.5em minus 0.4em\relax PMLR, 2020, pp. 708--717.

\bibitem{jagtap2020control}
P.~Jagtap, G.~J. Pappas, and M.~Zamani, ``Control barrier functions for unknown
  nonlinear systems using {G}aussian processes,'' \emph{arXiv preprint
  arXiv:2010.05818}, 2020.

\bibitem{wang2018safe}
L.~Wang, E.~A. Theodorou, and M.~Egerstedt, ``Safe learning of quadrotor
  dynamics using barrier certificates,'' in \emph{2018 IEEE International
  Conference on Robotics and Automation (ICRA)}.\hskip 1em plus 0.5em minus
  0.4em\relax IEEE, 2018, pp. 2460--2465.

\bibitem{berkenkamp2017safe}
F.~Berkenkamp, M.~Turchetta, A.~Schoellig, and A.~Krause, ``Safe model-based
  reinforcement learning with stability guarantees,'' in \emph{Advances in
  Neural Information Processing Systems}, 2017, pp. 908--918.

\bibitem{tomlin1998conflict}
C.~Tomlin, G.~J. Pappas, and S.~Sastry, ``Conflict resolution for air traffic
  management: A study in multiagent hybrid systems,'' \emph{IEEE Transactions
  on Automatic Control}, vol.~43, no.~4, pp. 509--521, 1998.

\bibitem{mellinger2012mixed}
D.~Mellinger, A.~Kushleyev, and V.~Kumar, ``Mixed-integer quadratic program
  trajectory generation for heterogeneous quadrotor teams,'' in \emph{2012 IEEE
  International Conference on Robotics and Automation}.\hskip 1em plus 0.5em
  minus 0.4em\relax IEEE, 2012, pp. 477--483.

\bibitem{cheng2019end}
R.~Cheng, G.~Orosz, R.~M. Murray, and J.~W. Burdick, ``End-to-end safe
  reinforcement learning through barrier functions for safety-critical
  continuous control tasks,'' in \emph{Proceedings of the AAAI Conference on
  Artificial Intelligence}, vol.~33, 2019, pp. 3387--3395.

\bibitem{choi2020reinforcement}
J.~Choi, F.~Casta{\~n}eda, C.~J. Tomlin, and K.~Sreenath, ``Reinforcement
  learning for safety-critical control under model uncertainty, using control
  {L}yapunov functions and control barrier functions,'' \emph{arXiv preprint
  arXiv:2004.07584}, 2020.

\bibitem{fisac2018general}
J.~F. Fisac, A.~K. Akametalu, M.~N. Zeilinger, S.~Kaynama, J.~Gillula, and
  C.~J. Tomlin, ``A general safety framework for learning-based control in
  uncertain robotic systems,'' \emph{IEEE Transactions on Automatic Control},
  vol.~64, no.~7, pp. 2737--2752, 2018.

\bibitem{Gillula2011guaranteed}
J.~H. {Gillulay} and C.~J. {Tomlin}, ``Guaranteed safe online learning of a
  bounded system,'' in \emph{IEEE/RSJ International Conference on Intelligent
  Robots and Systems}, 2011, pp. 2979--2984.

\bibitem{Gillula2012guaranteed}
J.~H. {Gillula} and C.~J. {Tomlin}, ``Guaranteed safe online learning via
  reachability: tracking a ground target using a quadrotor,'' in \emph{IEEE
  International Conference on Robotics and Automation}, 2012, pp. 2723--2730.

\bibitem{Akametalu2014reachability}
A.~K. {Akametalu}, J.~F. {Fisac}, J.~H. {Gillula}, S.~{Kaynama}, M.~N.
  {Zeilinger}, and C.~J. {Tomlin}, ``Reachability-based safe learning with
  {G}aussian processes,'' in \emph{IEEE Conference on Decision and Control},
  2014, pp. 1424--1431.

\bibitem{khalil2002nonlinear}
H.~K. Khalil and J.~W. Grizzle, \emph{Nonlinear Systems}.\hskip 1em plus 0.5em
  minus 0.4em\relax Prentice hall Upper Saddle River, NJ, 2002, vol.~3.

\bibitem{bellman1943stability}
R.~Bellman, ``The stability of solutions of linear differential equations,''
  \emph{Duke Mathematical Journal}, vol.~10, no.~4, pp. 643--647, 1943.

\bibitem{daniel1998experimental}
S.~Daniel-Berhe and H.~Unbehauen, ``Experimental physical parameter estimation
  of a thyristor driven {DC}-motor using the {HMF}-method,'' \emph{Control
  Engineering Practice}, vol.~6, no.~5, pp. 615--626, 1998.

\end{thebibliography}

\appendix

% In this appendix, we present some preliminary results.
% \section{Proposition \ref{prop:derivative bound}}\label{app:Prop derivative bound}
\begin{proposition}\label{prop:derivative bound}
Suppose Assumption \ref{assump:Lipschitz} holds. Let $x,x'\in\mathcal{X}$ and $u\in\mathcal{U}$. We have 
$
    \|f(x)+g(x)u-f(x')-g(x')u\|_2\leq \theta(u)\|x-x'\|_2.
$
\end{proposition}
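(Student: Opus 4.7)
The plan is to reduce the vector-valued inequality to a componentwise estimate, apply the Lipschitz hypotheses from Assumption \ref{assump:Lipschitz} to each scalar entry, and then reassemble the pieces into the $2$-norm bound so that the structure of $\theta(u)$ defined in \eqref{eq:theta} emerges naturally.

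First I would examine the $j$-th component of the vector $f(x)+g(x)u-f(x')-g(x')u$, which reads $f_j(x)-f_j(x')+\sum_{s=1}^m (g_{j,s}(x)-g_{j,s}(x'))u_s$. Applying the triangle inequality and pulling the scalars $|u_s|$ out of the absolute values gives an upper bound of $|f_j(x)-f_j(x')|+\sum_{s=1}^m |g_{j,s}(x)-g_{j,s}(x')|\,|u_s|$. Using the Lipschitz constants from Assumption \ref{assump:Lipschitz} on each of these differences, this quantity is bounded above by $\bigl(L_{f_j}+\sum_{s=1}^m L_{g_{j,s}}|u_s|\bigr)\|x-x'\|_2$, which is exactly the $j$-th coefficient whose square appears in the definition \eqref{eq:theta} of $\theta(u)$.

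Next I would square both sides of this componentwise bound, sum over $j=1,\dots,n$, and take the square root, yielding
\begin{equation*}
\|f(x)+g(x)u-f(x')-g(x')u\|_2\leq \sqrt{\sum_{j=1}^n \Bigl(L_{f_j}+\sum_{s=1}^m L_{g_{j,s}}|u_s|\Bigr)^2}\,\|x-x'\|_2,
\end{equation*}
which by \eqref{eq:theta} is precisely $\theta(u)\|x-x'\|_2$.

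The argument is essentially a direct chain of triangle inequality and Lipschitz estimates, so there is no real conceptual obstacle. The only bookkeeping care needed is to keep the scalar factors $u_s$ in the correct place when pulling them out of the absolute values, and to make sure the same $\|x-x'\|_2$ factors out uniformly in $j$ before summing so that $\theta(u)$ appears as a clean prefactor rather than entangled inside the sum.
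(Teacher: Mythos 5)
Your proof is correct and follows essentially the same route as the paper's: a componentwise triangle-inequality and Lipschitz estimate giving the factor $L_{f_j}+\sum_{s=1}^m L_{g_{j,s}}|u_s|$, then reassembly into the $2$-norm. The only difference is that you spell out the final square-sum-and-root step, which the paper leaves implicit.
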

\begin{proof}
We analyze $\|f(x)+g(x)u-f(x')-g(x')u\|_2$ element-wise. Consider the $j$-th component of $f(x)+g(x)u$ and $f(x')+g(x')u$. We have that
\begin{subequations}\label{eq:derivative bound}
\begin{align}
    &|f_j(x)+(g(x)u)_j-f_j(x')-(g(x')u)_j|\nonumber\\
    %=&|f_j(x)+\sum_{s=1}^mg_{j,s}(x)u_s-f_j(x')-\sum_{s=1}^mg_{j,s}(x')u_s|\label{eq:derivative bound 1-1}\\
    % =&|(f(x)-f(x'))\{j\}+\sum_{s=1}^m(g(x)-g(x'))\{j,s\}u\{s\}|\label{eq:derivative bound 1-2}\\
    \leq &|f_j(x)-f_j(x')| + |\sum_{s=1}^m(g_{j,s}(x)-g_{j,s}(x'))u_s|\label{eq:derivative bound 1-3}\\
    \leq & \left(L_{f_j}+\sum_{s=1}^mL_{g_{j,s}}|u_s|\right)\|x-x'\|_2,\label{eq:derivative bound 1-4}
\end{align}
\end{subequations}
where 
% \eqref{eq:derivative bound 1-1} holds by matrix multiplication, 
% \eqref{eq:derivative bound 1-2} follows by reorganizing \eqref{eq:derivative bound 1-1},
\eqref{eq:derivative bound 1-3} holds by triangle inequality and matrix multiplication, and \eqref{eq:derivative bound 1-4} holds by Assumption \ref{assump:Lipschitz}. Given \eqref{eq:derivative bound} holds for all $j=1,\ldots,n$, we have that the proposition holds.
\end{proof}

\end{document}